%
%
%
%
%
%
\documentclass[
 aip,
 sd,%
 amsmath,amssymb,
preprint,%
nofootinbib]
{revtex4-1}
\usepackage{graphicx}
\usepackage{dcolumn}
\usepackage{bm}
\usepackage{amsthm}
\newtheorem{theorem}{Theorem}
\newtheorem{corollary}{Corollary}
\newtheorem{lemma}{Lemma}

\newcommand{\overbar}[1]{\mkern 1.5mu\overline{\mkern-1.5mu#1\mkern-1.5mu}\mkern 1.5mu}

\begin{document}

\title[Lattice Sum Zeros:3]{Zeros of Lattice Sums: 3. Reduction of  the Generalised Riemann Hypothesis to Specific Geometries}

\author{R.C. McPhedran,\\
School of Physics, University of Sydney,\\
Sydney, NSW Australia 2006.}
  
\begin{abstract} 
 The location of zeros of the basic double sum over the square lattice is studied. This sum can be represented in terms of the product of the Riemann zeta function and the Dirichlet beta function, so that the assertion that all its non-trivial zeros lie on the critical line is a particular case of the Generalised Riemann Hypothesis (GRH). The treatment given here is an extension of that in two previous papers (arxiv:1601.01724, 1602.06330), where it was shown that non trivial zeros of the double sum either lie on the critical line or on lines of unit modulus of an analytic function intersecting the critical line. The extension enables  more specific conclusions to be drawn about the arrangement of zeros of the double sum on the critical line, which are interleaved with zeros of analytic functions, all of which  lie on the critical line. Possible arrangements of zeros are studied, and it is shown that in all identified cases the GRH holds.
\end{abstract}
\maketitle





%


\section{Introduction}
The Riemann Hypothesis (RH) that all non-trivial zeros of the function $\zeta(s)$ lie on the critical line $\Re(s)=\Re(\sigma+i t)=1/2$ is widely regarded as one of the most important and difficult unsolved problems in mathematics\cite{tandhb}. The Generalised Riemann Hypothesis (GRH) that non-trivial zeros of Dirichlet $L$ functions
with integer characters also lie on the critical line has also been widely investigated. The results we present below consist of a number of numerical and analytic investigations of a particular case of the GRH, pertaining to the most important double sum of the Epstein zeta type:
\begin{equation}
S_0(s;\lambda)=\sum_{p_1,p_2}' \frac{1}{(p_1^2+p_2^2 \lambda^2)^s},
\label{et1}
\end{equation}
where the sum over the integers $p_1$ and $p_2$ runs over all integer pairs, apart from $(0,0)$, as indicated by the superscript prime. The quantity $\lambda$ corresponds to the period ratio of the
rectangular lattice, and $s$ is an arbitrary complex number. For $\lambda^2$ an integer, this is an Epstein zeta function, but for $\lambda^2$ non-integer we will refer to it as a lattice sum over the rectangular lattice. Many results connected with lattice  sums of this and more general forms have been collected in the recent book {\em Lattice Sums Then and Now}\cite{lsb}, hereafter denoted {\em LSTN}. For $\lambda=1$, the sum (\ref{et1}) takes a simple form for which the GRH is applicable:
\begin{equation}
S_0(s;1)=4 \zeta(s) L_{-4}(s),
\label{et1a}
\end{equation}
using the notation of Zucker and Robertson \cite{zandr} for Dirichlet $L$ functions.
For $\lambda\neq 1$, in general $S_0(\lambda, s)$ will have non-trivial zeros off the critical line, as was discussed in a previous article\cite{part1} (hereafter referred to as I). The discussion given in a second recent paper \cite{part2} which we build on here
coupled two functions previously studied, denoted as ${\cal T}_-(s)$ \cite{prt} and ${\cal T}_+(s)$ \cite{ki, lagandsuz}, with three new analytic functions, ${\cal K}(1,1;s)$, ${\cal K}(0,0;s)$ and ${\cal K}_\lambda (0,0;s)$. It was shown there that ${\cal K}(0,0;s)$ has zeros both off and on the critical line, while we show here by an asymptotic argument that ${\cal K}_\lambda (0,0;s)$ has all its zeros on the critical line.  In the second paper \cite{part2} combinations of ${\cal K}(1,1;s)$ and ${\cal K}(1,1;1-s)$ were constructed, denoted by  ${\cal U}_{\cal K}(1,1;s)$ and ${\cal V}_{\cal K}(1,1;s)$, and it was shown that all zeros of $S_0(s;1)$ corresponded to
the condition ${\cal U}_{\cal K}(1,1;s)=-1$, or, equivalently, ${\cal V}_{\cal K}(1,1;s)=-1$.

The aim of this paper is to extend the discussion of the second paper\cite{part2} in such a way as to establish a one-to-one correspondence between zeros of $S_0(s;1)$ on the critical line and those of a function ${\cal L}(s)$ previously studied by Lagarias and Suzuki\cite{lagandsuz}, and shown to have all its zeros on the critical line. In fact, one can show that the distribution function of the
zeros of ${\cal L}(s)$ is the same as that of ${\cal T}_-(s)$ and ${\cal T}_+(s)$, with the latter corresponding to any specific value of the function
$\arg \zeta (1+2 i t)$, known to be monotonic in $t$ for $t$ not small\cite{ki}.In this paper, we will concentrate on the case of the square lattice ($\lambda=1$), but will also use results from I\cite{part1}
and II\cite{part2} in the limit as $\lambda\rightarrow 1$, particularly in defining the function ${\cal K}_\lambda (0,0;s)$. While the variety of functions used in the analysis may be at first sight daunting, it demonstrates that the context of double sums and rectangular lattices is richer than that of single sums like that in the Riemann zeta function, and that this greater richness offers extra opportunities for the development of analytic and asymptotic arguments relating to the RH and the GRH. The results in this paper and in its predecessors have been obtained on the basis of extensive numerical investigations, but in the interests of brevity we refer interested readers to papers I and II  for graphical and tabular data. It should be stressed that it is not overly difficult for the expressions  presented below to be implemented in appropriate symbolic software by those interested in their own explorations of the geometric contexts and arguments we describe.

Bogomolny and  Leboeuf\cite{bandl} have  discussed the distribution and  separation of zeros for $S_0( s;1)$, finding that the product form (\ref{et1a}) of this basic sum resulted in
a distribution of zeros with higher probability of smaller gaps than for individual Dirichlet $L$ functions.
Numerical investigations of the distribution and separation of zeros of  more general Epstein zeta functions have been discussed by Hejhal\cite{hejhal1}, and by Bombieri and Hejhal \cite{hejhal2}. Such investigations are difficult for large $t$ even on the most powerful available computers, due to the number of terms required in the most convenient general expansion for the functions (see Section 2) and the degree of cancellation between terms.

Section 2 contains essential results from paper II, both in their form for general $\lambda$ and for $\lambda=1$. These are used in Sections 3 and 4 to prove significant results for double sums, including the division of the complex $s$ plane into extended regions (running from $\sigma=-\infty$ to $\sigma=\infty$), discrete island regions (with bounded variation in $\sigma$ and $t$) and inner island regions within the latter. The most important result is that all zeros of $S_0(s;1)$ not lying on the boundaries between island and inner island regions
must lie on the critical line. Section 5 considers the properties of the zeros and poles of the function ${\cal V}_{\cal K}(1,1;s)$, which within island regions must all lie on the critical line. Sufficient information is provided by the ordering of these zeros and poles to indicate in specific cases that zeros of $S_0(s;1)$ must lie on the critical line. The specific cases considered encompass all those known to occur at this stage.

\section{Rectangular and Square Lattice Sums}
The double sums we consider are, for the rectangular lattice, analytic in the complex variable $s$, and depend on the real parameter $\lambda$. They reduce to sums for the square lattice when $\lambda$ tends to unity. For brevity of notation, we will sometimes omit the parameter $\lambda$ when it takes the value unity. We will also indicate the partial derivative with respect to $\lambda$ by attaching this symbol as a subscript to the function name.

Connected to the double sum  (\ref{et1}) is a  general class of MacDonald function double sums for rectangular lattices:
\begin{equation}
{\cal K}(n,m;s;\lambda)=\pi^n\sum_{p_1,p_2=1}^\infty  \left(\frac{p_2^{s-1/2+n}}{p_1^{s-1/2-n}}\right) K_{s-1/2+m}(2\pi p_1 p_2\lambda).
\label{mac1}
\end{equation}
For $\lambda\geq 1$ and the (possibly complex) number $s$ small in magnitude, such sums converge rapidly, facilitating numerical evaluations. (The sum gives accurate answers
as soon as the argument of the MacDonald function exceeds the modulus of its order by  a factor of 1.3 or so.) The double sums satisfy the following symmetry relation, obtained by interchanging $p_1$ and $p_2$ in the definition (\ref{mac1}):
\begin{equation}
{\cal K}(n,-m;s;\lambda)={\cal K}(n,m;1-s;\lambda).
\label{mac1a}
\end{equation}

The lowest order sum ${\cal K}(0,0;s;\lambda)$ occurs in the representation of $S_0(s;\lambda)$ due to Kober\cite{kober}:
\begin{equation}
\lambda^{s+1/2} \frac{\Gamma(s)}{8\pi^s} S_0(\lambda;s)=\frac{1}{4} \frac{\xi_1(2 s)}{\lambda^{s-1/2}}+\frac{1}{4} \lambda^{s-1/2} \xi_1(2 s-1)+{\cal K}(0,0;s;\frac{1}{\lambda}).
\label{mac2}
\end{equation}
Here $\xi_1(s)$ is the symmetrised zeta function. 
In terms of the Riemann zeta function, (\ref{mac2}) is
\begin{equation}
S_0(s;\lambda)=\frac{2 \zeta (2 s)}{\lambda^{2 s}}+2\sqrt{\pi}\frac{\Gamma(s-1/2) \zeta(2 s-1)}{\Gamma(s)\lambda}+
\frac{8\pi^s}{\Gamma(s) \lambda^{s+1/2}}{\cal K}(0,0;s;\frac{1}{\lambda}).
\label{mac2a}
\end{equation}

A fully symmetrised form of (\ref{mac2}) (symmetric under both $s\rightarrow 1-s$ and $\lambda\rightarrow1/\lambda$)  is:
\begin{equation}
\lambda^{s} \frac{\Gamma(s)}{8\pi^s} S_0(s;\lambda)={\cal T}_+(s;\lambda)+\frac{1}{\sqrt{\lambda}}  {\cal K}(0,0;s;\frac{1}{\lambda}),
\label{mac2s}
\end{equation}
where
\begin{equation}
{\cal T}_+(s;\lambda)=\frac{1}{4}\left[\frac{\xi_1(2 s)}{\lambda^s}+\frac{\xi_1(2 s-1)}{\lambda^{1-s}}\right].
\label{mac2s1}
\end{equation}
Note that ${\cal T}_+( 1-s;\lambda)={\cal T}_+(s;\lambda)$ and $ {\cal K}(0,0;1-s;\lambda)={\cal K}(0,0;s;\lambda)$, so that the left-hand side of equation (\ref{mac2s}) must then be unchanged under replacement of $s$ by $1-s$. The left-hand side is also unchanged under replacement of $\lambda$ by $1/\lambda$, so the same is true for the
sum of the two terms on the right-hand side, although in general it will not be true for them individually. The symmetry relations for $S_0(s;\lambda)$ then are
\begin{equation}
\lambda^{s} \frac{\Gamma(s)}{8\pi^s} S_0(s;\lambda)=\frac{1}{\lambda^{s}} \frac{\Gamma(s)}{8\pi^s} S_0\left(s;\frac{1}{\lambda}\right)=
\lambda^{1-s} \frac{\Gamma(1-s)}{8\pi^{(1-s)}} S_0(1-s;\lambda)=\frac{1}{\lambda^{1-s}} \frac{\Gamma(1-s)}{8\pi^{(1-s)}} S_0\left(1-s;\frac{1}{\lambda}\right).
\label{mac2s2}
\end{equation}
From the equations (\ref{mac2s2}), if $s_0$ is a zero of $S_0(s;\lambda)$ then
\begin{equation}
S_0(s_0;\lambda)=0~ \implies~S_0(s_0;1/\lambda)=0=S_0(1-s_0;1/\lambda)=S_0(1-s_0;\lambda).
\label{mac2s2a}
\end{equation}
Another interesting deduction from (\ref{mac2s}) relates to the derivative of $S_0(\lambda, s_0)$ with respect to $\lambda$:
\begin{eqnarray}
& &\lambda^s S_0(s;\lambda) =\frac{1}{\lambda^s} S_0\left(s;\frac{1}{\lambda}\right)~\implies~\nonumber \\
&& s \lambda^{s-1} S_0(s;\lambda) +\lambda^s \frac{\partial}{\partial \lambda} S_0(s;\lambda) =\frac{-s}{\lambda^{s+1}} S_0\left(s;\frac{1}{\lambda}\right)-
\frac{1}{\lambda^{s+2}}\frac{\partial}{\partial \lambda}  S_0\left(s;\frac{1}{\lambda}\right),
\label{mac2s2b}
\end{eqnarray}
so that
\begin{equation}
\left. \frac{\partial}{\partial \lambda} S_0(s;\lambda)\right|_{\lambda=1}=S_{0,\lambda}(s;1) =-s S_0(s;1).
\label{mac2s2c}
\end{equation}

Combining (\ref{mac2s}) and (\ref{mac2s2}), we arrive at a general symmetry relationship  for ${\cal K}(0,0;s;\lambda)$:
\begin{equation}
{\cal T}_+(s;\lambda)-{\cal T}_+\left(s; \frac{1}{\lambda} \right)=\sqrt{\lambda}{\cal K}(0,0;s;\lambda)-\frac{1}{\sqrt{\lambda}}  {\cal K}\left(0,0;s;\frac{1}{\lambda}\right),
\label{mac2s3}
\end{equation}
or
\begin{eqnarray}
&&\frac{1}{4}\left[\xi_1(2 s)\left(\frac{1}{\lambda^{s}}-\lambda^s\right) +\xi_1(2 s-1)\left(\frac{1}{\lambda^{1-s}}-\lambda^{1-s}\right) \right]=\nonumber\\
&&\sqrt{\lambda}{\cal K}(0,0;s;\lambda)-\frac{1}{\sqrt{\lambda}}  {\cal K}\left(0,0;s;\frac{1}{\lambda}\right).
\label{mac2s4}
\end{eqnarray}
This identity holds for all values of $s$ and $\lambda$. One use of it is to expand about $\lambda=1$, which gives identities for the partial derivatives of 
${\cal K}(0,0;s;\lambda)$ with respect to $\lambda$, evaluated  at $\lambda=1$. The first of these is
\begin{equation}
{\cal L}(s)=\left. -4 \frac{\partial}{\partial \lambda} {\cal T}_+(s;\lambda)\right|_{\lambda=1}=s\xi_1(2 s)+(1-s) \xi_1(2 s-1)=-2 {\cal K}(0,0;s;1)-4 {\cal K}_\lambda(0,0;s;1).
\label{mac2s5}
\end{equation}
All three functions occurring in (\ref{mac2s5}) are even under $s\rightarrow 1-s$.

By analogy to  the equation (\ref{mac2s1}) we define:
\begin{equation}
{\cal T}_-(s;\lambda)=\frac{1}{4}\left[\frac{\xi_1(2 s)}{\lambda^s}-\frac{\xi_1(2 s-1)}{\lambda^{1-s}}\right].
\label{mac2s1a}
\end{equation}
This function is odd under $s\rightarrow 1-s$. For it, the analogue to equation (\ref{mac2s5}) is
\begin{equation}
{\cal L}_-(s)=\left. -4 \frac{\partial}{\partial \lambda} {\cal T}_-(s;\lambda)\right|_{\lambda=1}=s\xi_1(2 s)-(1-s) \xi_1(2 s-1)=-2[ {\cal T}_-(s)+(2 s-1) {\cal T}_+(s)].
\label{mac2s5a}
\end{equation}

It is known\cite{prt,hejhal3,ki,lagandsuz,mcp13} that ${\cal T}_+(s;\lambda)$ and ${\cal T}_-(s;\lambda)$ have all their zeros on the critical line if $\lambda\leq 1$ and $t> 3.9125$. This can be easily seen from the properties of the function
\begin{equation}
{\cal U}(s)=\frac{\xi_1(2 s-1)}{\xi_1(2 s)},
\label{Udef}
\end{equation}
which has modulus smaller than unity to the right of the critical line (where its numerator has its zeros)  and greater than unity to its left (where its denominator has its zeros) for  $t> 3.9125$. By contrast, ${\cal K}(0,0;s)$ has zeros both on the critical line and off it\cite{mcp04}. Also, Lagarias and Suzuki\cite{lagandsuz} have proved that the function we denote by ${\cal L}(s)$ has all its zeros on the critical line. (We may understand this result  since ${\cal L}(s)=0$ if and only if $-1={\cal U}(s) (1-s)/s$. The modulus of the right-hand side is smaller than unity for $\sigma>1/2$, and larger than unity for $\sigma<1/2$.) Similarly, ${\cal L}_-(s)$ has all its zeros on the critical line.  Ki \cite{ki} has proved that the argument of ${\cal U}(1/2+i t)$ is a monotonic decreasing function as $t$ increases, provided that $t>7$. He has also shown that its $t$ derivative is asymptotically $-2 \log (t)$.

In addition to the function ${\cal U}(s)$, we will employ a closely associated function
\begin{equation}
{\cal V}(s)=\frac{{\cal T}_+(s)}{{\cal T}_-(s)}=\frac{1+{\cal U}(s)}{1-{\cal U}(s)}.
\label{Vdef}
\end{equation}
${\cal V}(s)$ is purely imaginary on the critical line and ${\cal U}(s)$ has modulus unity there. The zeros and poles of ${\cal V}(s)$ all lie on the critical line, and all are simple.
The fixed points of the transformation (\ref{Vdef}) are ${\cal V}(s)={\cal U}(s)=\pm i$,
and its normal form is
\begin{equation}
\frac{{\cal V}(s)-i}{{\cal V}(s)+i}=i\left(\frac{{\cal U}(s)-i}{{\cal U}(s)+i}\right).
\label{VUform}
\end{equation}
\section{Properties Related to ${\cal K}(1,1;s;\lambda)$}
The recurrence relations for MacDonald functions give rise to those for the double sums:
\begin{equation}
\frac{\partial}{\partial \lambda} {\cal K}(n,m;s;\lambda)=-[{\cal K}(n+1,m+1;s;\lambda)+{\cal K}(n+1,m-1;s;\lambda)],
\label{mac3}
\end{equation}
and
\begin{equation}
\frac{(m+s-1/2)}{\lambda} {\cal K}(n,m;s;\lambda)=[{\cal K}(n+1,m+1;s;\lambda)-{\cal K}(n+1,m-1;s;\lambda)].
\label{mac4}
\end{equation}
These may be used to construct operators which raise $n$ and lower $m$, or raise $n$ and raise $m$, respectively:
\begin{equation}
-\frac{1}{2}\left[ \frac{\partial}{\partial \lambda}+\frac{(m+s-1/2)}{\lambda}  \right] {\cal K}(n,m;s;\lambda)= {\cal K}(n+1,m-1;s;\lambda),
\label{mac5}
\end{equation}
and
\begin{equation}
-\frac{1}{2}\left[ \frac{\partial}{\partial \lambda}-\frac{(m+s-1/2)}{\lambda}  \right] {\cal K}(n,m;s;\lambda)= {\cal K}(n+1,m+1;s;\lambda).
\label{mac5a}
\end{equation}

From (\ref{mac5a}) we have
\begin{equation}
{\cal K}(1,1;s;\lambda)=-\frac{1}{2}{\cal K}_\lambda (0,0;s;\lambda)+\frac{(s-1/2)}{2 \lambda}{\cal K}(0,0;s;\lambda)
\label{K11def}
\end{equation}
The symmetric and antisymmetric parts of ${\cal K}(1,1;s;\lambda)$ are
\begin{equation}
{\cal K}(1,1;s;\lambda)+{\cal K}(1,1;1-s;\lambda)=-{\cal K}_\lambda (0,0;s;\lambda),
\label{K11sym}
\end{equation}
and
\begin{equation}
{\cal K}(1,1;s;\lambda)-{\cal K}(1,1;1-s;\lambda)=\frac{(s-1/2)}{ \lambda}{\cal K}(0,0;s;\lambda).
\label{K11asym}
\end{equation}

It is  useful to define
\begin{equation}
{\cal V}_{\cal K} (1,1;s;\lambda)=\frac{{\cal K}(1,1;s;\lambda)-{\cal K}(1,1;1-s;\lambda)}{{\cal K}(1,1;s;\lambda)+{\cal K}(1,1;1-s;\lambda)},
\label{etf24}
\end{equation} 
and
\begin{equation}
{\cal U}_{\cal K} (1,1;s;\lambda)=\frac{{\cal K}(1,1;s;\lambda)}{{\cal K}(1,1;1-s;\lambda)}=\left(\frac{1+{\cal V}_{\cal K} (1,1;s;\lambda)}{1-{\cal V}_{\cal K} (1,1;s;\lambda)}\right).
\label{etf24a}
\end{equation}
From equations (\ref{K11sym},\ref{K11asym}),
\begin{equation}
{\cal V}_{\cal K} (1,1;s;\lambda)=\frac{-(s-1/2){\cal K}(0,0;s;\lambda)}{\lambda {\cal K}_\lambda (0,0;s;\lambda)}=\frac{-(s-1/2)}{\lambda \partial {\log\cal K}(0,0;s;\lambda)/\partial \lambda}.
\label{etf25}
\end{equation}
From equations (\ref{etf24a}) and (\ref{etf25}), we have the symmetry relations
\begin{equation}
{\cal U}_{\cal K} (1,1;1-s;\lambda)=\frac{1}{{\cal U}_{\cal K} (1,1;s;\lambda)}, ~{\cal V}_{\cal K} (1,1;1-s;\lambda)=-{\cal V}_{\cal K} (1,1;s;\lambda).
\label{symuv}
\end{equation}

In what follows, we will abbreviate the notation for the sums ${\cal K}$ and their $\lambda$ derivatives by suppressing the entry for the geometric parameter $\lambda$ when it takes the value unity. We will do the same for ${\cal S}_0(s;\lambda)$.

{\bf  Remark 1:} From equation (\ref{etf24}), non-trivial zeros of ${\cal V}_{\cal K} (1,1;s)$ correspond to ${\cal U}_{\cal K} (1,1;s)=1$ and
${\cal K}(0,0;s)=0$; they may lie on or off the critical line. Poles of ${\cal V}_{\cal K} (1,1;s)$ correspond to ${\cal U}_{\cal K} (1,1;s)=-1$ and
${\cal K}_\lambda(0,0;s)=0$; numerical investigations indicate that they lie on the critical line.

{\bf Remark 2:} All zeros and poles of ${\cal V}_{\cal K} (1,1;s)$ lie on lines $|{\cal U}_{\cal K} (1,1;s)|=1$, where ${\cal V}_{\cal K} (1,1;s)$ is pure imaginary. If $\arg[{\cal U}_{\cal K} (1,1;s)]$ varies monotonically along such lines, then zeros and poles of 
${\cal V}_{\cal K} (1,1;s)$ alternate along the lines, and are all of first order.

\section{ ${\cal K}(1,1;s)$ and the Zeros of $S_0(s)$}
From equations (\ref{K11sym}, \ref{K11asym}, \ref{mac2s}) we find:
\begin{equation}
{\cal K}(1,1;s)= s\left[  \frac{\Gamma(s)}{16\pi^s} S_0(s)\right]-(s-1/2) \left[\frac{\xi_1(2s-1)}{4} \right] .
\label{n26}
\end{equation}
Let us define a symmetrised form of  $S_0(s)$:
\begin{equation}
\tilde{S}_0(s)=\frac{\Gamma (s) S_0(s)}{8\pi^s}.
\label{neq4-1}
\end{equation}
Then, putting $\lambda=1$ in equation (\ref{mac2s}), we have
\begin{equation}
\tilde{S}_0(s)={\cal T}_+(s)+{\cal K}(0,0;s).
\label{neq4-2}
\end{equation}
We also have from equation (\ref{mac2s5}):
\begin{equation}
\tilde{S}_0(s)=-2[(s-1/2) {\cal T}_-(s)+{\cal K}_\lambda (0,0;s)].
\label{neq4-3}
\end{equation}
These last  two equations may be re-expressed in a form suitable for computations, giving the MacDonald function sums in terms of
functions readily available in symbolic packages and multiple precision libraries:
\begin{equation}
{\cal K}(0,0;s) =\tilde{S}_0(s)-{\cal T}_+(s),
\label{erep2}
\end{equation}
and
\begin{equation}
{\cal K}_\lambda (0,0;s) =-\frac{1}{2}\tilde{S}_0(s)-\left(s-\frac{1}{2}\right) {\cal T}_-(s).
\label{erep3}
\end{equation}
By contrast with direct summation of the initial form (\ref{mac1}), the forms (\ref{erep2}) and (\ref{erep3}) are well adapted to efficient use when $|s|>>1$.

We have then as a consequence of the equations (\ref{neq4-1},\ref{neq4-2}):
\begin{theorem}
If $\tilde{S}_0(s_0)=0$ for $s_0$ off the critical line, then none of the following can be zero: ${\cal L}(s_0)$, ${\cal T}_+(s_0)$,
${\cal T}_-(s_0)$, ${\cal K} (0,0;s_0)$ and ${\cal K}_\lambda (0,0;s_0)$. Also, if $\tilde{S}_0(s_1)=0$ for $s_1$ on the critical line,
then at least one  of  ${\cal K} (0,0;s_1)$ and ${\cal K}_\lambda (0,0;s_1)$ must be non-zero.
\end{theorem}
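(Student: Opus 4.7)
The plan is to dispatch each of the five non-vanishing assertions by combining one of the two representations of $\tilde S_0$ derived just above the theorem — equation (\ref{neq4-2}), $\tilde S_0(s)={\cal T}_+(s)+{\cal K}(0,0;s)$, and equation (\ref{neq4-3}), $\tilde S_0(s)=-2[(s-1/2){\cal T}_-(s)+{\cal K}_\lambda(0,0;s)]$ — with the cited results that ${\cal T}_+(s)$, ${\cal T}_-(s)$ (Pritsker/Ki) and ${\cal L}(s)$ (Lagarias--Suzuki) have all their zeros on the critical line (in the range $t>3.9125$ where the quoted theorems apply; small $t$ is treated by direct inspection).

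For the first part, fix $s_0$ off the critical line with $\tilde S_0(s_0)=0$. The assertions ${\cal T}_+(s_0)\neq 0$, ${\cal T}_-(s_0)\neq 0$, ${\cal L}(s_0)\neq 0$ are then immediate, since each of these functions has all zeros on $\sigma=1/2$. For ${\cal K}(0,0;s_0)$, suppose by contradiction ${\cal K}(0,0;s_0)=0$; substituting into (\ref{neq4-2}) yields ${\cal T}_+(s_0)=0$, contradicting the preceding step. For ${\cal K}_\lambda(0,0;s_0)$, suppose likewise ${\cal K}_\lambda(0,0;s_0)=0$; substituting into (\ref{neq4-3}) gives $(s_0-1/2){\cal T}_-(s_0)=0$, and since $s_0$ is off the critical line the factor $s_0-1/2$ is nonzero, forcing ${\cal T}_-(s_0)=0$ — again a contradiction.

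For the second part, take $s_1=1/2+it_1$ on the critical line with $\tilde S_0(s_1)=0$, and suppose for contradiction that both ${\cal K}(0,0;s_1)=0$ and ${\cal K}_\lambda(0,0;s_1)=0$. The same two substitutions now give ${\cal T}_+(s_1)=0$ and (provided $t_1\neq 0$) ${\cal T}_-(s_1)=0$. Adding and subtracting the defining expressions (\ref{mac2s1}) and (\ref{mac2s1a}) at $\lambda=1$ yields $\xi_1(2s_1)=0=\xi_1(2s_1-1)$. But on the critical line $\Re(2s_1)=1$ and $\Re(2s_1-1)=0$, and the non-vanishing of $\zeta$ on the lines $\Re(s)=0,1$ together with the absence of trivial zeros of $\xi_1$ there shows that both factors are nonzero, a contradiction. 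The anticipated obstacles are minor: one must verify that $t_1=0$ (i.e.\ $s_1=1/2$) is not itself a zero of $\tilde S_0$ and that the quoted zero-location theorems cover all $t$ of interest; both points are easily checked against the known behaviour of $S_0(s;1)=4\zeta(s)L_{-4}(s)$ at small height.
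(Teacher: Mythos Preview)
Your argument is correct and is essentially the same as the paper's: both halves use equations (\ref{neq4-2}) and (\ref{neq4-3}) together with the fact that ${\cal T}_+$, ${\cal T}_-$, and ${\cal L}$ have all their zeros on the critical line, and the second half reduces to the observation that ${\cal T}_+$ and ${\cal T}_-$ share no zeros. The paper simply quotes that last fact, while you unpack it via the non-vanishing of $\zeta$ on $\Re(s)=1$; this is the same content at a finer level of detail. (One small attribution slip: the reference for ${\cal T}_\pm$ is Taylor, not Pritsker.)
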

\begin{proof}
Since $s_0$ is off the critical line then the following are immediately non-zero since all their zeros lie on the critical line:
${\cal L}(s_0)$, ${\cal T}_+(s_0)$ and ${\cal T}_-(s_0)$. If either ${\cal K} (0,0;s_0)$ or ${\cal K}_\lambda (0,0;s_0)$ is zero as well as 
$\tilde{S}_0(s_0)$, then we have immediately  ${\cal T}_+(s_0)=0$ or
${\cal T}_-(s_0)=0$, respectively- a contradiction.

For the second part of the Theorem,  if $\tilde{S}_0(s_1)=0$ for $s_1$ on the critical line, then at least one of ${\cal K} (0,0;s_1)$ or 
${\cal K}_\lambda (0,0;s_1)$ must be non-zero, since ${\cal T}_+(s)$ and
${\cal T}_-(s)$ have no zeros in common. We can then say that one of ${\cal T}_+(s_1)$ or ${\cal T}_-(s_1)$ must be non-zero.
\end{proof}

Another easily proved result is the following:
\begin{theorem} ${\cal K}(1,1;s)$ and  ${\cal K}(1,1;1-s)$ are not simultaneously zero for $s$ off the critical line. ${\cal K}(0,0;s)$ and  ${\cal K}_{\lambda}(0,0;s)$ are not simultaneously zero for $s$ off the critical line.
\end{theorem}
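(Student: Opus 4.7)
My plan is to reduce both parts of Theorem~2 to the Lagarias--Suzuki theorem that $\mathcal{L}(s)$ has all its zeros on the critical line, via the identities already assembled in Section~2.

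For the second statement (about $\mathcal{K}(0,0;s)$ and $\mathcal{K}_\lambda(0,0;s)$), I would simply invoke the representation of $\mathcal{L}(s)$ at $\lambda=1$ given in equation (\ref{mac2s5}),
\begin{equation*}
\mathcal{L}(s)=-2\mathcal{K}(0,0;s)-4\mathcal{K}_\lambda(0,0;s).
\end{equation*}
If both $\mathcal{K}(0,0;s_0)=0$ and $\mathcal{K}_\lambda(0,0;s_0)=0$, then $\mathcal{L}(s_0)=0$. But by the Lagarias--Suzuki result cited after (\ref{Udef}), every zero of $\mathcal{L}(s)$ lies on the critical line, so $s_0$ cannot be off the critical line. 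This contradicts the hypothesis and proves the second assertion.

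For the first statement, I would pass through the symmetric/antisymmetric decomposition of $\mathcal{K}(1,1;s;\lambda)$ specialised to $\lambda=1$. Combining (\ref{K11sym}) and (\ref{K11asym}) with $\lambda=1$, if $\mathcal{K}(1,1;s_0)=\mathcal{K}(1,1;1-s_0)=0$, then their sum and difference vanish, giving
\begin{equation*}
\mathcal{K}_\lambda(0,0;s_0)=0\qquad\text{and}\qquad (s_0-\tfrac{1}{2})\,\mathcal{K}(0,0;s_0)=0.
\end{equation*}
Since $s_0$ is off the critical line, $s_0-1/2\neq 0$, so $\mathcal{K}(0,0;s_0)=0$ as well. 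This is precisely the situation forbidden by the second half of the theorem, and we obtain a contradiction.

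The argument is essentially algebraic given the identities of Section~2; the only substantive input is the Lagarias--Suzuki theorem, which is quoted rather than proved. The one place that deserves care is the step $s_0-1/2\neq 0$, which is exactly what lets the antisymmetric relation (\ref{K11asym}) transfer the hypothesis on $\mathcal{K}(1,1;\,\cdot\,)$ to a hypothesis on $\mathcal{K}(0,0;\,\cdot\,)$; without the off-critical-line assumption the implication would break down. No genuine obstacle is anticipated.
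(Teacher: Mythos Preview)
Your proof is correct and follows essentially the same route as the paper: reduce the simultaneous vanishing of $\mathcal{K}(1,1;s)$ and $\mathcal{K}(1,1;1-s)$ via (\ref{K11sym})--(\ref{K11asym}) to the simultaneous vanishing of $\mathcal{K}(0,0;s)$ and $\mathcal{K}_\lambda(0,0;s)$, and then invoke (\ref{mac2s5}) together with the Lagarias--Suzuki theorem on the zeros of $\mathcal{L}(s)$. The only difference is cosmetic---you prove the second assertion first and are explicit about using $s_0\neq 1/2$ to cancel the factor $(s_0-1/2)$, which the paper leaves implicit.
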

\begin{proof}
From equations (\ref{K11sym}) and (\ref{K11asym}), if ${\cal K}(1,1;s)$ and  ${\cal K}(1,1;1-s)$ are both zero for some $s$, then so are ${\cal K}(0,0;s)$ and  ${\cal K}_{\lambda}(0,0;s)$.
From (\ref{mac2s5}) we then have ${\cal L}(s)=0$, so that $s$ must lie on the critical line.
\end{proof}
Note that if ${\cal K}(1,1;s)$ and  ${\cal K}(1,1;1-s)$ are equal, then ${\cal K}(0,0;s)$ has to be zero, which can occur for $s$ either on or off the critical line. If ${\cal K}(1,1;s)$ and  $-{\cal K}(1,1;1-s)$ are equal, then ${\cal K}_\lambda (0,0;s)$ has to be zero.

From equations (\ref{etf25}), (\ref{erep2}) and (\ref{erep3}),
\begin{equation}
{\cal V}_{\cal K} (1,1;s)=\frac{\tilde{S}_0(s)-{\cal T}_+(s)}{{\cal T}_-(s)+\tilde{S}_0(s)/(2 (s-1/2))}.
\label{srep6}
\end{equation}
We have also
\begin{equation}
{\cal U}_{\cal K} (1,1;s)=\frac{2 s \tilde{S}_0(s)-(s-1/2) \xi_1(2 s-1)}{2 (1- s) \tilde{S}_0(s)+(s-1/2) \xi_1(2 s)}.
\label{srep4}
\end{equation}

\begin{theorem}
If $S_0(s_0)=0$ then both ${\cal U}_{\cal K} (1,1;s_0)/{\cal U}(s_0)=-1$ and ${\cal V}_{\cal K} (1,1;s_0)/{\cal V}(s_0)=-1$. If 
either of ${\cal U}_{\cal K} (1,1;s_0)/{\cal U}(s_0)=-1$ or ${\cal V}_{\cal K} (1,1;s_0)/{\cal V}(s_0)=-1$ holds, then either ${\cal L}(s_0)=0$, in which case $s_0$ must lie on the critical line, or $S_0(s_0)=0$.
\label{mthm}
\end{theorem}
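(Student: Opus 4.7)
The plan is to verify both directions by direct algebraic manipulation of the two explicit formulae (\ref{srep4}) and (\ref{srep6}), showing that each of the stated conditions is equivalent to the single product identity $\tilde{S}_0(s_0)\,{\cal L}(s_0)=0$, and then invoking the Lagarias--Suzuki result cited in Section~2.

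For the forward direction, I would simply substitute $\tilde{S}_0(s_0)=0$ into (\ref{srep6}) and (\ref{srep4}). In (\ref{srep6}) this yields $ {\cal V}_{\cal K}(1,1;s_0) = -{\cal T}_+(s_0)/{\cal T}_-(s_0) = -{\cal V}(s_0)$ at once, using the definition (\ref{Vdef}). In (\ref{srep4}) the factors of $(s_0-1/2)$ survive on both numerator and denominator and cancel, giving ${\cal U}_{\cal K}(1,1;s_0) = -\xi_1(2s_0-1)/\xi_1(2s_0) = -{\cal U}(s_0)$. A small caveat is to note that $S_0(s_0)=0$ implies $\tilde{S}_0(s_0)=0$ except where the pole of $\Gamma(s)$ could cancel the zero of $S_0$; this is harmless since it only occurs at trivial zeros.

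For the converse, I would treat each hypothesis separately. Starting from ${\cal U}_{\cal K}(1,1;s_0) = -{\cal U}(s_0)$, cross-multiply using (\ref{srep4}) and (\ref{Udef}). The matched terms $\pm(s_0-1/2)\xi_1(2s_0-1)\xi_1(2s_0)$ cancel, and the remaining relation simplifies to
\begin{equation}
2\,\tilde{S}_0(s_0)\bigl[s_0\,\xi_1(2s_0) + (1-s_0)\,\xi_1(2s_0-1)\bigr] = 2\,\tilde{S}_0(s_0)\,{\cal L}(s_0) = 0,
\end{equation}
by the definition of ${\cal L}$ in (\ref{mac2s5}). For the other hypothesis ${\cal V}_{\cal K}(1,1;s_0) = -{\cal V}(s_0)$, cross-multiply (\ref{srep6}) against ${\cal V}(s) = {\cal T}_+(s)/{\cal T}_-(s)$; the ${\cal T}_+(s_0){\cal T}_-(s_0)$ cross-terms cancel, leaving $\tilde{S}_0(s_0)\bigl[(2s_0-1)\,{\cal T}_-(s_0) + {\cal T}_+(s_0)\bigr]=0$. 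The key algebraic step is to verify $(2s-1)\,{\cal T}_-(s) + {\cal T}_+(s) = \tfrac{1}{2}{\cal L}(s)$, which follows immediately by expanding ${\cal T}_\pm$ using (\ref{mac2s1},\ref{mac2s1a}) at $\lambda=1$ and collecting coefficients of $\xi_1(2s)$ and $\xi_1(2s-1)$.

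From the factorisation $\tilde{S}_0(s_0)\,{\cal L}(s_0)=0$, either $\tilde{S}_0(s_0)=0$, which is equivalent to $S_0(s_0)=0$, or ${\cal L}(s_0)=0$, in which case the Lagarias--Suzuki result recalled in Section~2 forces $s_0$ onto the critical line. The only minor obstacle I anticipate is the bookkeeping around $s_0=1/2$, where the $(s_0-1/2)$ factor in (\ref{srep6}) appears in a denominator; but such $s_0$ already lies on the critical line, so the conclusion is unaffected. There is no conceptual difficulty — the content of the theorem is that the prior identities have been arranged precisely so that this clean factorisation drops out.
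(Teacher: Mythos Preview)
Your proposal is correct and follows essentially the same route as the paper: direct substitution into (\ref{srep6}) and (\ref{srep4}) for the forward direction, and algebraic reduction to the vanishing of $\tilde{S}_0(s_0)\,{\cal L}(s_0)$ for the converse. The only organisational difference is that the paper first shows the two hypotheses ${\cal U}_{\cal K}/{\cal U}=-1$ and ${\cal V}_{\cal K}/{\cal V}=-1$ are equivalent via the M\"obius relation between ${\cal U}$ and ${\cal V}$, and then solves (\ref{srep6}) for $\tilde{S}_0$ to obtain the rational expression (\ref{mthm3})--(\ref{mthm4}) with ${\cal L}(s)$ in the denominator, whereas you cross-multiply each hypothesis separately to reach the product factorisation; your identity $(2s-1){\cal T}_-(s)+{\cal T}_+(s)=\tfrac{1}{2}{\cal L}(s)$ is exactly the content of that rearrangement.
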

\begin{proof}
At a zero $s_0$ of $S_0(s)$, we have from (\ref{srep6}) that 
\begin{equation}
{\cal V}_{\cal K} (1,1;s_0)=-\frac{{\cal T}_+(s_0)}{{\cal T}_-(s_0)}=-{\cal V}(s_0).
\label{mthm1a}
\end{equation}
Also, from (\ref{srep4}) and (\ref{Udef}),
\begin{equation}
{\cal U}_{\cal K} (1,1;s_0)=-{\cal U}(s_0). 
\label{mthm2}
\end{equation}

Suppose now that ${\cal U}_{\cal K} (1,1;s_0)/{\cal U}(s_0)=-1$. Then from equations (\ref{etf24}) and (\ref{Vdef}),
\begin{equation}
{\cal V}_{\cal K} (1,1;s_0)=\frac{{\cal U}_{\cal K} (1,1;s_0)-1}{{\cal U}_{\cal K} (1,1;s_0)+1}=-\frac{1+{\cal U}(s_0)}{{1-{\cal U}(s_0)}},
\label{mthm2a}
\end{equation}
so that  ${\cal V}_{\cal K} (1,1;s_0)/{\cal V}(s_0)=-1$. This argument also works in the reverse direction.

Then from (\ref{etf25}) and (\ref{mac2s}),
\begin{equation}
\tilde{S}_0(s)=\frac{{\cal V}_{\cal K} (1,1;s){\cal T}_-(s)+{\cal T}_+(s)}{1-{\cal V}_{\cal K} (1,1;s)/(2 s-1)}=\frac{{\cal T}_-(s)({\cal V}_{\cal K} (1,1;s)+{\cal V}(s))}{1-{\cal V}_{\cal K} (1,1;s)/(2 s-1)}.
\label{mthm3}
\end{equation}
This can also be written as
\begin{equation}
\tilde{S}_0(s)=\frac{2 (2 s-1) {\cal T}_-(s)^2 [{\cal V}_{\cal K} (1,1;s)+{\cal V}(s)]}{{\cal L}(s)-2 {\cal T}_-(s)[{\cal V}_{\cal K} (1,1;s)+{\cal V}(s)]}.
\label{mthm4}
\end{equation}
Hence ${\cal V}_{\cal K} (1,1;s_0)=-{\cal V}(s_0)$ guarantees  $\tilde{S}_0(s)=0$, unless ${\cal L}(s_0)=0$, in which case $\tilde{S}_0(s_0)=- (2 s_0-1) {\cal T}_-(s_0)$.
Note that if ${\cal L}(s_0)=0$, from (\ref {mac2s5a}) we know that $ {\cal T}_+(s_0)\neq 0$ and  $ {\cal T}_-(s_0)\neq 0$, since these two functions have no zeros in common.
Hence, if  ${\cal L}(s_0)=0$ then $\tilde{S}_0(s_0)\neq 0$.
\end{proof}

We next consider that behaviour of  ${\cal U}_{\cal K}(1,1;s)$ in the complex plane for $t$ not small. The theorem which follows shows that this function well away from the critical line
has the opposite behaviour to ${\cal U}(s)$. The latter is smaller than unity in magnitude to the right of the critical line, and larger in magnitude than unity to the left of it. The former has magnitude which increases without bound for $\sigma$ moving well to the right of the critical line, and tends towards zero as $\sigma$ moves well to the left of the critical line.

 \begin{theorem} The function ${\cal U}_{\cal K}(1,1;s)$ has "island" regions symmetric under
 $s\rightarrow 1-\overbar{s}$, defined by boundaries in $\sigma>1/2$ inside which its modulus is less than unity, and outside which it is greater than unity.
  It has  monotonic argument variation around each side of island regions surrounding intervals of the critical line. 
 \label{thmisland}
 \end{theorem}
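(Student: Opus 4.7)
The plan has three stages: first a horizontal-asymptotic analysis of ${\cal U}_{\cal K}(1,1;s)$, then a deduction of the island-type level-set structure from the symmetries already established, and finally a Cauchy--Riemann argument for monotonicity of $\arg{\cal U}_{\cal K}$ along the boundary curves.

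For the asymptotics I would work from the explicit formula (\ref{srep4}). As $\sigma\to\infty$ only the four nearest-neighbour terms contribute to leading order in $S_0(s;1)$, so $S_0(s;1)\to 4$ and hence $\tilde S_0(s)\sim\Gamma(s)/(2\pi^s)$; likewise $\zeta(2s),\zeta(2s-1)\to 1$, giving $\xi_1(2s)\sim\pi^{-s}\Gamma(s)$ and $\xi_1(2s-1)\sim\pi^{1/2-s}\Gamma(s-1/2)$. Substituted into (\ref{srep4}), the denominator collapses to $\tfrac12\pi^{-s}\Gamma(s)$ by exact cancellation between the $1-s$ and $s-1/2$ coefficients, while the numerator is dominated by $s\,\Gamma(s)\pi^{-s}$; comparing $\Gamma(s)$ and $\Gamma(s-1/2)$ through the Stirling ratio $\Gamma(s)/\Gamma(s-1/2)\sim s^{1/2}$ then yields ${\cal U}_{\cal K}(1,1;s)\sim 2s$. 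Combined with (\ref{symuv}) this gives ${\cal U}_{\cal K}(1,1;s)\sim 1/(2(1-s))$ as $\sigma\to-\infty$, so $|{\cal U}_{\cal K}|$ grows without bound to the right of the critical line and tends to zero to the left, exactly opposite to the behaviour of ${\cal U}(s)$.

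Next, combining (\ref{symuv}) with Schwarz reflection across the real axis produces $|{\cal U}_{\cal K}(1,1;1-\overbar s)|=1/|{\cal U}_{\cal K}(1,1;s)|$. Setting $\sigma=1/2$ shows that $|{\cal U}_{\cal K}(1,1;1/2+it)|=1$, so the critical line is always contained in the level set $\{|{\cal U}_{\cal K}|=1\}$, and the whole set is invariant under $s\mapsto 1-\overbar s$. Together with the asymptotic growth on the right, any bounded connected component of $\{|{\cal U}_{\cal K}|<1\}\cap\{\sigma>1/2\}$ has compact closure whose boundary is an arc of $\{|{\cal U}_{\cal K}|=1\}$ joined to an interval of the critical line; these are the islands, and by the reflection symmetry each is paired with a region in $\sigma<1/2$ where $|{\cal U}_{\cal K}|>1$. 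The actual existence of such bounded components for the relevant ranges of $t$ is established by the numerical evidence from papers I and II rather than by a purely analytic argument; the analytic content of this step is just the confinement of the level set to a bounded strip and the two reflection symmetries that force its pairing.

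Finally, for the monotonic variation of $\arg{\cal U}_{\cal K}$ along each side of an island boundary I would apply the Cauchy--Riemann equations to the analytic function $\log{\cal U}_{\cal K}(1,1;s)$. On a regular point of the curve $|{\cal U}_{\cal K}|=1$, the tangent is orthogonal to $\nabla\Re\log{\cal U}_{\cal K}$, and by Cauchy--Riemann this direction is parallel to $\nabla\arg{\cal U}_{\cal K}$; consequently $\arg{\cal U}_{\cal K}$ changes at the definite rate $|({\cal U}_{\cal K})'/{\cal U}_{\cal K}|$ along the curve, which has a fixed sign except at isolated critical points of $\log{\cal U}_{\cal K}$. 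The main obstacle I foresee is ruling out such critical points on the island boundaries themselves; a fully rigorous version would require a uniform lower bound on this logarithmic derivative there, and in keeping with the methodology used throughout the paper this is probably supplied by numerical verification rather than pure analysis.
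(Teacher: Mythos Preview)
Your proposal is correct and follows essentially the same route as the paper. The paper also begins from the explicit representation of ${\cal U}_{\cal K}(1,1;s)$ in terms of $S_0(s)$ and the $\xi_1$-functions (it writes the intermediate formula (\ref{isleq-1}) from (\ref{n26}), equivalent to your use of (\ref{srep4})), applies the Stirling ratio $\Gamma(s)/\Gamma(s-1/2)\sim\sqrt{s}$, and obtains the same leading behaviour ${\cal U}_{\cal K}(1,1;s)\sim 2s$ for $\sigma$ and $t$ large; it then invokes exactly the reflection identity $|{\cal U}_{\cal K}(1,1;\overline{1-s})|=1/|{\cal U}_{\cal K}(1,1;s)|$ and the Cauchy--Riemann equations on the level curves $|{\cal U}_{\cal K}|=1$ for the monotonicity of the argument.

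Two minor points of comparison: your observation of the exact cancellation $(1-s)+(s-1/2)=1/2$ in the denominator of (\ref{srep4}) is a slightly cleaner way to reach the $2s$ asymptotic than the paper's series manipulation in (\ref{isleq-3}); and your explicit caveats---that the actual existence of bounded island components is supplied by the numerics of papers I and II, and that monotonicity could in principle fail at critical points of $\log{\cal U}_{\cal K}$ on the boundary---are well placed, since the paper handles the first only implicitly (``If they are also limited in their $t$ range, they form islands'') and addresses the second only afterwards via the discussion of enclave regions.
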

 \begin{proof}
 We have  from equation (\ref{n26}) the expansion 
 \begin{equation}
 {\cal U}_{\cal K}(1,1;s)=\frac{s {\cal C}(0,1;s)-4\sqrt{\pi}\Gamma (s-1/2) \zeta (2 s-1)/\Gamma (s)}{(1-s) {\cal C}(0,1;s)+4 (s-1/2) \zeta (2 s)}.
 \label{isleq-1}
 \end{equation}
 Using the asymptotic expansion for $\Gamma (s+1/2)/\Gamma (s)$ when $|t|>>1$,
  \begin{equation}
 {\cal U}_{\cal K}(1,1;s)\simeq  \frac{s \zeta (s) L_{-4}(s)-\sqrt{\pi s}(1-1/(8 s)+\ldots)  \zeta (2 s-1)}{(1-s) \zeta(s) L_{-4}(s) + (s-1/2) \zeta (2 s)}.
 \label{isleq-2}
 \end{equation}
 We assume $t$ and $\sigma -1/2$ are sufficiently large so the second term in the numerator is negligible compared with the first, and that in the denominator the series for the product $\zeta (s) L_{-4} (s)$
 and for $\zeta (2 s)$ may be used. We then obtain the following approximation from (\ref{isleq-2}):
 \begin{equation}
 {\cal U}_{\cal K}(1,1;s)\simeq  \frac{2 s (1+1/2^s-\sqrt{\pi/s})}{1- (s-1) 2^{1-s}}.
 \label{isleq-3}
 \end{equation}
 This shows that $|{\cal U}_{\cal K}(1,1;s)|\rightarrow \infty$ as $|s|\rightarrow \infty$ in $\sigma >>1/2$. Hence, regions with $|{\cal U}_{\cal K}(1,1;s)|>1$ in $\sigma>1/2$ must be bounded in their $\sigma$
 range. Since $|{\cal U}_{\cal K}(1,1;\overline{1-s})|=|1/{\cal U}_{\cal K}(1,1;s)|$, regions with $|{\cal U}_{\cal K}(1,1;s)|<1$ in $\sigma<1/2$ must also be bounded in their $\sigma$
 range. If they are also limited in their $t$ range, they form islands symmetric about the critical line, with boundaries given by $|{\cal U}_{\cal K}(1,1;s)|=1$. 
 
 On the island boundary in $\sigma>1/2$, $|{\cal U}_{\cal K}(1,1;s)|$ goes from smaller  than unity in the island to larger than unity outside it, and so by the Cauchy-Riemann equations its argument must increase
 around the boundary in the direction of increasing $t$.  Since the argument of this function is even under $s\rightarrow \overline{1-s}$, it must also increase around the left boundary as $t$ increases. On the critical line within the island region, the argument increases as $t$ decreases.
 \end{proof}
 
 A convenient criterion for deciding whether an interval on the critical line is in an extended region or an island region is that
\begin{equation}
\frac{d}{d t} {\cal U}_{\cal K} (1,1;\frac{1}{2}+i t)<0 ~{\rm in~ an~ island~ region}
\label{srep15}
\end{equation}
and is positive in an extended region (or in an enclave region- see below).

In paper II, figures and tables show the behaviour of the various functions just discussed in the region of three islands, one for $t$ small (around 13), the next for $t$ around 355 and the third for $t$ around 8290. For the second and third cases, the island includes several zeros of $ {\cal U}_{\cal K}(1,1;s)$ in $\sigma>1/2$, as well as poles in $\sigma<1/2$. These poles and zeros serve as the core of {\em inner islands}, defined by boundaries on which $|{\cal U}_{\cal K} (1,1;s_0)/{\cal U}(s_0)|=1$. From Theorem \ref{mthm}, it is evident that any zeros of $S_0(s)$ not on the critical line must lie on the boundaries of inner islands. As well as the inner islands,
the figures in paper II show that the islands also include what we call {\em enclave regions}, having as their core poles of $ {\cal U}_{\cal K}(1,1;s)$ in $\sigma>1/2$, lying closer to the critical line than the zeros of that function. In the presence of enclave regions, the monotonicity of argument variation referred to in Theorem \ref{thmisland} applies to a contour $|{\cal U}_{\cal K}(1,1;s)|=1$ excluding enclaves. Note also that
\begin{equation}
\frac{d}{d t} {\cal U}_{\cal K} (1,1;\frac{1}{2}+i t)>0 ~{\rm in~ an~ enclave~ region},
\label{srep15en}
\end{equation}
just as in an extended region. As well, $|{\cal U}_{\cal K} (1,1;s_0)/{\cal U}(s_0)|>1$ in an enclave region if $\sigma>1/2$; thus any
zero of $S_0(s)$ or ${\cal L}(s)$ in an enclave must lie on the  critical line.

In numerical studies, as reported in II, one can take the proportion of all zeros of $\zeta(s)$ not lying in inner islands as a proxy for the
fraction of zeros assuredly on the critical line, given that this proportion is seen to vary little as $t$ increases.
 The mean value of the fraction for the zeros lying up to $t=8000$ is 0.7266, with the standard deviation being 0.0113.This fraction of course is numerically rather than analytically determined, but it is of interest to compare it with the results established by analytic means for the fraction of zeros proven to lie on the critical line. These have progressed from 1/3 (Levinson\cite{levinson}) to 2/5 (Conrey\cite{conrey1} and 41\% (Bui, Conrey and Young\cite{conrey2}). More recently, Conrey, Iwaniec and Soundararajan\cite{conrey3} have shown that at least 56\% of the  non-trivial zeros in the family of all Dirichlet $L$ functions are simple and lie on the critical line.

In addition to (\ref{srep4}) and (\ref{srep6}), we have for the derivatives with respect to $s$:
\begin{equation}
[ {\cal V}'_{\cal K}(1,1;s)+{\cal V}'(s)]=\frac{2 {\cal U}'_{\cal K}(1,1;s)}{(1+{\cal U}_{\cal K}(1,1;s))^2}+\frac{2 {\cal U}'(s)}{(1-{\cal U}(s))^2}.
\label{srep20}
\end{equation}
Hence, at a zero $s_0$ of $\tilde{S}_0(s)$:
\begin{equation}
[ {\cal V}'_{\cal K}(1,1;s_0)+{\cal V}'(s_0)]=\frac{2 [{\cal U}'_{\cal K}(1,1;s_0)+{\cal U}'(s_0)]}{(1-{\cal U}'(s_0))^2}.
\label{srep21}
\end{equation}
For $s$ on the critical line,
\begin{equation}
\frac{d {\cal U}_{\cal K}(1,1;1/2+ i t)}{d s}={\cal U}_{\cal K}(1,1;1/2+ i t)\frac{d \arg {\cal U}_{\cal K}(1,1;1/2+ i t)}{d t}
\label{srep22}
\end{equation}
and
\begin{equation}
\frac{d {\cal U}(1/2+ i t)}{d s}={\cal U}(1/2+ i t)\frac{d \arg {\cal U}(1/2+ i t)}{d t}.
\label{srep23}
\end{equation}
Hence, for  $s_0$ on the critical line,
\begin{eqnarray}
\frac{d {\cal U}_{\cal K}(1,1;1/2+ i t_0)}{d s}+\frac{d {\cal U}(1/2+ i t_0)}{d s}={\cal U}(1/2+ i t_0)  && \nonumber \\
\left[\frac{d \arg {\cal U}(1,1/2+ i t_0)}{d t}-\frac{d \arg {\cal U}_{\cal K}(1,1;1/2+ i t_0)}{d t}\right]  & &.
\label{srep24}
\end{eqnarray}

{\bf Remark 3:} We know that all zeros of $\tilde{S}_0(s)$ in extended regions  or enclaves lie on the critical line. From equation (\ref{srep24}), we then also know that all these zeros are simple, since  the argument derivatives with respect to $t$ of ${\cal U}_{\cal K}$ and ${\cal U}$ there have opposite signs (with the latter being non-zero).

Returning to the equations (\ref{mthm2}) and (\ref{mthm1a}), we have that zeros of $S_0(s)$ and ${\cal L}(s)$ lie on lines where $|{\cal U}_{\cal K}(1,1;s)|=|{\cal U}(s)|$.
For island regions excluding enclaves, the contours $|{\cal U}_{\cal K}(1,1;s)/{\cal U}(s)|=1$ lie  on separate lines surrounding each of the zeros of ${\cal U}_{\cal K}(1,1;s)$ in $\sigma>1/2$ and poles in $\sigma<1/2$. We will call the regions within islands where  $|{\cal U}_{\cal K}(1,1;s)/{\cal U}(s)|>1$ the {\em outer islands}, adjoining the inner island regions where $|{\cal U}_{\cal K}(1,1;s)/{\cal U}(s)|<1$ and enclave regions.

{\bf Remark 4:} All zeros of $\tilde{S}_0(s)$ not on the critical line must lie on the boundaries between outer and inner islands. 

We define 
\begin{equation}
{\cal F}(s)=\frac{{\cal U}_{\cal K}(1,1;s)}{{\cal U}(s)}=\frac{1+{\cal G}(s)}{1-{\cal G}(s)},~~{\cal G}(s)=\frac{{\cal F}(s)-1}{{\cal F}(s)+1}.
\label{srep25}
\end{equation}
Then the boundaries between inner and outer islands are given by $|{\cal F}(s)|=1$.  In terms of argument derivatives on the critical line, then
the respective conditions for the extended (or enclave) regions, outer island regions and inner island regions are
\begin{equation}
\frac{\partial \arg {\cal U}_{\cal K}(1,1;1/2+ it)}{\partial t}>0,~~\frac{\partial \arg {\cal F}(1/2+ it)}{\partial t}>0,~~\frac{\partial \arg {\cal F}(1/2+ it)}{\partial t}<0.
\label{srep26}
\end{equation}

We can strengthen Remark 3, as follows.

{\bf Remark 5:} $S_0(s)$ has all its zeros on the critical line if and only if lines of argument $\pi$ run from zeros of ${\cal U}_{\cal K}(1,1;s)$ inside inner islands to the critical line without intersecting the boundaries of the inner island (the lines where $|{\cal F}(s)|=1$).

The argument of ${\cal F}(s)$ increases monotonically round the boundary of each inner island, so the condition just enunciated is also equivalent to the requirement that
$|{\cal F}(s)|=1$ and $\arg{\cal F}(s)=\pi$ only hold simultaneously on the critical line.

Number the inner islands with an integer $m$, and let the $m$th inner island segment on the critical line run from $t_l^{(m)}$ up to $t_u^{(m)}$. Let
\begin{equation}
\mu_l^{(m)}=\arg [-{\cal F}(1/2+i t_l^{(m)})],~\mu_u^{(m)}=\arg [-{\cal F}(1/2+i t_u^{(m)})].
\label{fthm1}
\end{equation}
\begin{theorem} The Riemann Hypothesis for $S_0(s)$ holds if and only if $\mu_u^{(m)}<0$, $\mu_l^{(m)}>0$ for all $m$.
\label{lthm}
\end{theorem}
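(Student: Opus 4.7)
My plan is to recast the claim as a winding-number count along the boundary of each inner island, assembled from earlier ingredients: Remark 5 (the Riemann Hypothesis for $S_0(s)$ holds iff ${\cal F}(s)=-1$ on an inner-island boundary can occur only on the critical line); the monotonicity of $\arg{\cal F}$ round the boundary of each inner island; and equation (\ref{srep26}), which gives $\partial_t\arg{\cal F}(1/2+it)<0$ on the portion of the critical line inside such an island. A preliminary observation is that the whole critical line lies in $\{|{\cal F}|=1\}$, because $|{\cal U}(1/2+it)|=|{\cal U}_{\cal K}(1,1;1/2+it)|=1$ by (\ref{Udef}) and (\ref{symuv}); hence the critical-line segment $[1/2+it_l^{(m)},\,1/2+it_u^{(m)}]$ is one arc of the closed boundary curve $\gamma_m$ of the $m$-th inner island, the remainder being an off-critical-line arc in $\sigma>1/2$ joining the endpoints.

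First I would define the continuous branch of $\mu(t)=\arg[-{\cal F}(1/2+it)]$ on the segment. By (\ref{srep26}) this branch decreases strictly from $\mu_l^{(m)}$ at $t=t_l^{(m)}$ to $\mu_u^{(m)}$ at $t=t_u^{(m)}$, and in the regime $\mu_u^{(m)}<0<\mu_l^{(m)}$ the continuous and principal branches coincide. I would then extend this continuous branch round the off-critical-line arc. By the monotonicity of $\arg{\cal F}$ it continues to decrease, and by the argument principle the total decrease of $\arg{\cal F}$ (equivalently $\mu$) round $\gamma_m$ equals $2\pi$ times the net number of zeros minus poles of ${\cal F}={\cal U}_{\cal K}/{\cal U}$ enclosed. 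Assuming this net winding number equals one---corresponding to the single core zero of ${\cal U}_{\cal K}$ in $\sigma>1/2$ at the centre of the island---the off-critical-line arc contributes a further decrease of $2\pi-(\mu_l^{(m)}-\mu_u^{(m)})$.

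The count that closes the proof is then direct. On $\gamma_m$ the equation $\mu\equiv 0\pmod{2\pi}$, equivalently ${\cal F}=-1$, has exactly one solution because $\mu$ decreases continuously and monotonically by a total of $2\pi$ round the loop. By Remark 5 the Riemann Hypothesis restricted to the $m$-th inner island holds iff this solution lies on the critical-line segment; the Intermediate Value Theorem applied to the strictly decreasing $\mu(t)$ on $[t_l^{(m)},t_u^{(m)}]$ shows that $\mu$ attains $0$ there iff $\mu_u^{(m)}<0<\mu_l^{(m)}$. When this condition fails the unique solution sits on the off-critical-line arc and, by Theorem \ref{mthm} (the alternative ${\cal L}(s_0)=0$ is ruled out since ${\cal L}$ has all its zeros on the critical line), yields a genuine zero of $S_0(s)$ off the critical line. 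Running the argument over all $m$ delivers the claimed equivalence.

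The main obstacle is justifying the winding-number assumption $N=1$ per inner island. One must exclude both the clustering of several zeros of ${\cal U}_{\cal K}(1,1;s)$ inside one island and the intrusion of zeros of ${\cal U}(s)$ (which sit on $\sigma=3/4$ under the classical Riemann Hypothesis for $\zeta$) that would reduce the count. Numerical evidence in paper II confirms $N=1$ for every island examined; a rigorous argument would combine the large-$t$ approximation (\ref{isleq-3}) from Theorem \ref{thmisland}, which localises the core zero of ${\cal U}_{\cal K}$, with a bound keeping ${\cal U}(s)$ away from $0$ in a suitable neighbourhood of each island. A minor subsidiary point is to check that the principal and continuous branches of $\mu$ agree in the relevant regime, which itself follows from $N=1$ together with the monotonicity on the segment.
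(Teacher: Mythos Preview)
Your argument is essentially the same route the paper takes: monotone decrease of $\arg[-{\cal F}(1/2+it)]$ on the critical-line segment (from (\ref{srep26})) plus uniqueness of the point ${\cal F}=-1$ on the half-boundary $\Gamma_+^{(m)}$, combined with Remark~5 and Theorem~\ref{mthm}. The paper simply asserts that ``this is then the single point on the boundary $\Gamma_+^{(m)}$ where either $S_0(s)=0$ or ${\cal L}(s)=0$'', whereas you make the underlying winding-number step explicit. In that sense you have been more careful than the paper: the statement that $N=1$ is not proved in the paper either, but is taken as part of the way inner islands are introduced (``the contours $|{\cal U}_{\cal K}(1,1;s)/{\cal U}(s)|=1$ lie on separate lines surrounding \emph{each} of the zeros of ${\cal U}_{\cal K}(1,1;s)$'').

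One correction to your obstacle analysis: your concern about zeros of ${\cal U}(s)$ intruding into an inner island and reducing the count is unfounded. A zero of ${\cal U}$ is a pole of ${\cal F}={\cal U}_{\cal K}/{\cal U}$, and the inner island is by definition the region where $|{\cal F}|<1$; no pole of ${\cal F}$ can lie there. (The same reasoning excludes poles of ${\cal U}_{\cal K}$.) So the only genuine issue is whether more than one zero of ${\cal U}_{\cal K}(1,1;s)$ can sit inside a single connected component of $\{|{\cal F}|<1\}$ in $\sigma>1/2$. The paper treats this as part of the setup rather than as something to be proved; your flag is therefore appropriate, but you should drop the ${\cal U}$-zero worry and the appeal to the classical RH for $\zeta$, which plays no role.
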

\begin{proof}
If $\mu_u^{(m)}<0$, $\mu_l^{(m)}>0$ for a given $m$, then, as $\arg [-{\cal F}(1/2+i t)]$ increases monotonically as $t$ decreases inside the inner island, there is a point on the critical line between  $t_u^{(m)}$ and  $t_l^{(m)}$ where  $\arg [-{\cal F}(1/2+i t)]=0$. This is then the single point on the boundary $\Gamma_+^{(m)}$ of that part of the inner island in $\sigma\ge 1/2$ where either $S_0(s)=0$ or ${\cal L}(s)=0$. Given this holds for all $m$, all zeros of $S_0(s)$ and ${\cal L}(s)$ in inner islands lie on the critical line. We also know that
all zeros of $S_0(s)$ in enclaves or outside inner islands (i.e. away from the boundaries of inner islands) lie on the critical line, completing this part of the proof.
If the Riemann Hypothesis holds for $S_0(s)$, we know it also holds for ${\cal L}(s)$. Every inner island $m$ has one part on its boundary $\Gamma_+^{(m)}$ where $\arg [-{\cal F}(s)]=0$: this must lie between  $t_u^{(m)}$ and  $t_l^{(m)}$. Given $\arg [-{\cal F}(s)]$ increases as one goes from the former to the latter, then $\mu_u^{(m)}<0$ and  $\mu_l^{(m)}>0$.
\end{proof}
\begin{corollary}
If between every two inner islands there exists at least one point on the critical line where $\arg[-{\cal F}(s)]=0$, i.e. one point on the critical line where either $S_0(s)=0$ or
${\cal L}(s)=0$, then the Riemann Hypothesis holds for $S_0(s)$.
\label{c1lthm}
\end{corollary}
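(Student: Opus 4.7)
The plan is to reduce the Corollary to Theorem \ref{lthm} by exploiting the monotonicity of $\arg[-{\cal F}(1/2+it)]$ along the critical line outside the inner islands. First I would verify that in outer island regions, extended regions, and enclaves one has
\[
\frac{d}{dt}\arg[-{\cal F}(1/2+it)] > 0.
\]
For outer islands this is the defining inequality in (\ref{srep26}). For extended regions and enclaves, (\ref{srep26}) and (\ref{srep15en}) give $\partial_t\arg{\cal U}_{\cal K} > 0$, while Ki's theorem supplies $\partial_t\arg{\cal U}<0$ for $t$ not small; hence $\partial_t\arg{\cal F} = \partial_t\arg{\cal U}_{\cal K} - \partial_t\arg{\cal U} > 0$. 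Adding $\pi$ leaves the sign of the derivative unchanged.

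Next, pick any two consecutive inner islands indexed $m$ and $m+1$. By the hypothesis together with Theorem \ref{mthm}, there is a $t^*\in(t_u^{(m)},t_l^{(m+1)})$ at which either $S_0(1/2+it^*)=0$ or ${\cal L}(1/2+it^*)=0$, and at any such point ${\cal F}(1/2+it^*)=-{\cal U}(1/2+it^*)/{\cal U}(1/2+it^*)=-1$, so $\arg[-{\cal F}(1/2+it^*)]=0$. The whole segment $[t_u^{(m)},t_l^{(m+1)}]$ lies outside inner islands, so the monotonicity of step one yields $\mu_u^{(m)}<0<\mu_l^{(m+1)}$ from the definitions in (\ref{fthm1}). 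To close the argument at the very first inner island, I would pick any critical-line zero $t^\circ<t_l^{(1)}$ of $\zeta(s)$ (plenty are available well below the first inner island), at which $\arg[-{\cal F}(1/2+it^\circ)]=0$, and apply the same monotonicity on $[t^\circ,t_l^{(1)}]$ to obtain $\mu_l^{(1)}>0$. Combining these inequalities, $\mu_u^{(m)}<0$ and $\mu_l^{(m)}>0$ for every $m$, and Theorem \ref{lthm} delivers the Riemann Hypothesis for $S_0(s)$.

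The main obstacle is the monotonicity of $\arg[-{\cal F}]$ along a segment between two consecutive inner islands, which can pass through several subregions of different type (extended, enclave, outer island). One must check that the derivative does not vanish or change sign at the interfaces, where the classifying inequalities of (\ref{srep26}) and (\ref{srep15en}) degenerate; a continuity argument along the critical line, using that the interfaces are real-analytic level sets of $|{\cal U}_{\cal K}|$ and $|{\cal F}|$ crossing the critical line at isolated points, is what is required. A minor secondary point is to ensure the zero $t^*$ in each gap lies strictly interior to $(t_u^{(m)},t_l^{(m+1)})$ so that the resulting inequalities are strict rather than weak.
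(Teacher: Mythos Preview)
Your argument is correct and follows essentially the same route as the paper's proof: both use the monotonic increase of $\arg[-{\cal F}(1/2+it)]$ outside inner islands to pass from a point where this argument vanishes to the endpoints $t_u^{(m)}$, $t_l^{(m+1)}$, concluding $\mu_u^{(m)}<0$ and $\mu_l^{(m+1)}>0$, and then invoke Theorem~\ref{lthm}. Your write-up is in fact more explicit than the paper's on two points: you spell out the monotonicity in each of the three non-inner-island region types via (\ref{srep26}) and Ki's result, and you treat the edge case of the first inner island separately, whereas the paper leaves both of these implicit.
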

\begin{proof}
Given an inner island $m$, it has at least one point with $t>t_u^{(m)}$ where $\arg [-{\cal F}(s)]=0$, and one with $t<t_l^{(m)}$. Going from the nearest such point above down to
 $t_u^{(m)}$, $\arg [-{\cal F}(s)]$ decreases, and so $\arg[-{\cal F}(1/2+i t_u^{(m)})]<0$. Going from the nearest such point below up to
 $t_l^{(m)}$, $\arg [-{\cal F}(s)]$ increases, and so $\arg[-{\cal F}(1/2+i t_l^{(m)})]>0$. Thus, for all $m$, $\mu_u^{(m)}<0$ and  $\mu_l^{(m)}>0$, so the Riemann Hypothesis holds
 for $S_0(s)$.
\end{proof}
\begin{corollary}
If the Riemann Hypothesis holds for $S_0(s)$, then between every two inner islands there exists at least one point on the critical line where $\arg[-{\cal F}(s)]=0$, i.e. one point on the critical line where either $S_0(s)=0$ or
${\cal L}(s)=0$.
\label{c2lthm}
\end{corollary}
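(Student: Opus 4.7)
The plan is to run the converse of Corollary \ref{c1lthm} by using the boundary data supplied by Theorem \ref{lthm} under the RH assumption, together with the monotone behaviour of $\arg[-\mathcal{F}(s)]$ on the segments of the critical line that lie outside inner islands.

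First I would invoke Theorem \ref{lthm}: since the Riemann Hypothesis is assumed to hold for $S_0(s)$, we have $\mu_u^{(m)}<0$ and $\mu_l^{(m)}>0$ for every inner island index $m$. Hence for any two consecutive inner islands labelled $m$ and $m+1$, with $t_u^{(m)}<t_l^{(m+1)}$, the boundary arguments satisfy
\begin{equation}
\arg[-\mathcal{F}(1/2+i t_u^{(m)})]=\mu_u^{(m)}<0<\mu_l^{(m+1)}=\arg[-\mathcal{F}(1/2+i t_l^{(m+1)})].
\end{equation}

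Next I would argue that $\arg[-\mathcal{F}(1/2+it)]$ is continuous and monotonically increasing as $t$ runs from $t_u^{(m)}$ up to $t_l^{(m+1)}$. Continuity follows from $|\mathcal{U}(s)|=|\mathcal{U}_{\mathcal{K}}(1,1;s)|=1$ on the critical line, which forces $|\mathcal{F}(s)|=1$, so $\mathcal{F}$ traces the unit circle and never vanishes or blows up there. For monotonicity, the interval $[t_u^{(m)},t_l^{(m+1)}]$ lies in the union of outer island, enclave, and extended region segments. On outer island segments $\partial_t\arg \mathcal{F}>0$ directly from equation (\ref{srep26}). On extended or enclave segments, equation (\ref{srep26}) gives $\partial_t\arg\mathcal{U}_{\mathcal{K}}>0$; combining this with Ki's result that $\partial_t\arg\mathcal{U}<0$ there, we obtain $\partial_t\arg\mathcal{F}=\partial_t\arg\mathcal{U}_{\mathcal{K}}-\partial_t\arg\mathcal{U}>0$. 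Adding the constant $\pi$ does not disturb this, so $\arg[-\mathcal{F}]$ is strictly increasing on each such sub-interval, and continuity across the (isolated) transition points then yields monotonicity on the whole interval. By the intermediate value theorem there exists $t^\star\in(t_u^{(m)},t_l^{(m+1)})$ with $\arg[-\mathcal{F}(1/2+it^\star)]=0$, i.e.\ $\mathcal{U}_{\mathcal{K}}(1,1;1/2+it^\star)/\mathcal{U}(1/2+it^\star)=-1$, and Theorem \ref{mthm} then forces either $S_0(1/2+it^\star)=0$ or $\mathcal{L}(1/2+it^\star)=0$.

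The delicate step I expect to need the most care is the joining of the monotonicity across the transition points where one leaves one sub-region for the next (for instance where an outer island abuts an extended segment, or where an enclave begins); the strict inequalities in (\ref{srep15}), (\ref{srep15en}) and (\ref{srep26}) are only asserted on the open sub-regions, and one must verify that at a transition point the derivative $\partial_t\arg\mathcal{F}$ does not change sign but at worst vanishes, so that the global function is nondecreasing and the net passage from a strictly negative to a strictly positive value forces a zero. This can be argued from the fact that $\mathcal{F}$ is analytic on a neighbourhood of the critical line between inner islands (no zeros or poles of numerator or denominator there), so $\arg\mathcal{F}(1/2+it)$ is real-analytic in $t$ and its derivative can have only isolated zeros without flipping sign. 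Alternatively, one can bypass the point-by-point monotonicity and simply observe that $\mathcal{F}$ lies on the unit circle throughout, and that the winding number of $-\mathcal{F}$ about the origin is unambiguous, so the continuous argument must traverse $0$ at least once between a strictly negative and a strictly positive boundary value.
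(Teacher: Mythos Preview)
Your proof is correct and follows the same route as the paper's: invoke Theorem~\ref{lthm} to obtain $\mu_u^{(m)}<0$ and $\mu_l^{(m+1)}>0$, then pass through $0$ on the segment of the critical line between $t_u^{(m)}$ and $t_l^{(m+1)}$. The paper's own argument is only three sentences long and simply asserts ``Accordingly, there exists at least one point\dots''; the monotonicity of $\arg[-\mathcal{F}(1/2+it)]$ between inner islands is taken as already established from the discussion around (\ref{srep26}) and the proof of Corollary~\ref{c1lthm}, whereas you spell out the case split (outer island, enclave, extended) and worry about the transition points. That extra care is not wrong, just more than the paper supplies; your closing remark that continuity of $\mathcal{F}$ on the unit circle plus the boundary signs already forces a crossing is essentially what the paper is tacitly relying on.
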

\begin{proof}
If the Riemann Hypothesis holds, then $\mu_u^{(m)}<0$, $\mu_l^{(m)}>0$ for all $m$. Thus, for every inner island $m$, $\mu_u^{(m)}<0$ and $\mu_l^{(m+1)}>0$. Accordingly, there exists
at least one point on the critical line between $t_u^{(m)}$ and $t_l^{(m+1)}$ where $\arg[-{\cal F}(s)]=0$.
\end{proof}

\section{The Geometric Framework for ${\cal V}_{\cal K}(1,1;s)/{\cal V}(s)$}

We now consider the properties of the poles and zeros of ${\cal V}_{\cal K}(1,1;s)/{\cal V}(s)$, and investigate what may be learned concerning the connection between the zeros of ${\cal L}(s)$ and $S_0(s)$.

From the definitions (\ref{Vdef}) and (\ref{etf25}), we have
\begin{equation}
\frac{{\cal V}_{\cal K}(1,1;s)}{{\cal V}(s)}=-\frac{(1-{\cal U}(s))(1-{\cal U}_{\cal K}(1,1;s))}{(1+{\cal U}(s))(1+{\cal U}_{\cal K}(1,1;s))}
=-\left[\frac{1+{\cal U}(s){\cal U}_{\cal K}(1,1;s)-{\cal U}(s)-{\cal U}_{\cal K}(1,1;s)}{1+{\cal U}(s){\cal U}_{\cal K}(1,1;s)+{\cal U}(s)+{\cal U}_{\cal K}(1,1;s)}\right],
\label{nsec1}
\end{equation}
and
\begin{equation}
\frac{{\cal V}_{\cal K}(1,1;s)}{{\cal V}(s)}+1=2\frac{({\cal U}(s)+{\cal U}_{\cal K}(1,1;s))}{(1+{\cal U}(s))(1+{\cal U}_{\cal K}(1,1;s))},~
\frac{{\cal V}_{\cal K}(1,1;s)}{{\cal V}(s)}-1=-2\frac{(1+{\cal U}(s){\cal U}_{\cal K}(1,1;s))}{(1+{\cal U}(s))(1+{\cal U}_{\cal K}(1,1;s))}.
\label{nsec2}
\end{equation}
The inverse relationships to (\ref{nsec1}) and (\ref{nsec2}) are
\begin{equation}
\frac{{\cal U}_{\cal K}(1,1;s)}{{\cal U}(s)}=\frac{({\cal V}(s)+1)({1+\cal V}_{\cal K}(1,1;s))}{({\cal V}(s)-1)(1-{\cal V}_{\cal K}(1,1;s))}=
-\left[\frac{1+{\cal V}(s){\cal V}_{\cal K}(1,1;s)+{\cal V}(s)+{\cal V}_{\cal K}(1,1;s)}{1+{\cal V}(s){\cal V}_{\cal K}(1,1;s)-{\cal V}(s)-{\cal V}_{\cal K}(1,1;s)}\right],
\label{nsec3}
\end{equation}
and
\begin{equation}
\frac{{\cal U}_{\cal K}(1,1;s)}{{\cal U}(s)}+1=+2\frac{({\cal V}(s)+{\cal V}_{\cal K}(1,1;s))}{({\cal V}(s)-1)(1-{\cal V}_{\cal K}(1,1;s))},~
\frac{{\cal U}_{\cal K}(1,1;s)}{{\cal U}(s)}-1=+2\frac{(1+{\cal V}(s){\cal V}_{\cal K}(1,1;s))}{({\cal V}(s)-1)(1-{\cal V}_{\cal K}(1,1;s))}.
\label{nsec4}
\end{equation}
\begin{theorem}
Within an island region (but outside enclaves) the following properties hold:
\begin{enumerate}
\item $|{\cal U}_{\cal K}(1,1;s)|<1$ in $\sigma>1/2$ and $|{\cal U}_{\cal K}(1,1;s)|>1$ in $\sigma<1/2$;
\item All zeros and poles of ${\cal V}_{\cal K}(1,1;s)/{\cal V}(s)$ lie on the critical line;
\item All zeros of $1+{\cal U}(s){\cal U}_{\cal K}(1,1;s)$ lie on the critical line, so that all solutions of ${\cal V}_{\cal K}(1,1;s)/{\cal V}(s)=1$ lie on the critical line;
\item All zeros of ${\cal K}(0,0;s)$ and ${\cal K}_\lambda(0,0;s)$ lie on the critical line;
\end{enumerate}
\label{frthm1}
\end{theorem}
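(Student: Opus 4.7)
The plan is to reduce all four assertions to two basic modulus estimates—one for ${\cal U}(s)$, already stated in Section~2 (namely $|{\cal U}(s)|<1$ for $\sigma>1/2$ and $|{\cal U}(s)|>1$ for $\sigma<1/2$, valid for $t>3.9125$), and the analogous estimate for ${\cal U}_{\cal K}(1,1;s)$, which is precisely assertion~(1) of the theorem—and then to read off each of the remaining statements directly from the M\"obius relations (\ref{nsec1})--(\ref{nsec2}) together with Remark~1.

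First I will establish assertion~(1), which is essentially a definitional consequence of Theorem~\ref{thmisland} combined with the definition of an enclave. Theorem~\ref{thmisland} tells us that the island boundaries are the contours $|{\cal U}_{\cal K}(1,1;s)|=1$ with $|{\cal U}_{\cal K}(1,1;s)|<1$ in the $\sigma>1/2$ interior; enclaves are, by construction, the subregions cored on poles of ${\cal U}_{\cal K}(1,1;s)$ lying in $\sigma>1/2$, where this inequality is reversed. Excluding enclaves therefore leaves $|{\cal U}_{\cal K}(1,1;s)|<1$ throughout the $\sigma>1/2$ part of the island. The matching inequality in $\sigma<1/2$ follows by combining the symmetry ${\cal U}_{\cal K}(1,1;1-s)=1/{\cal U}_{\cal K}(1,1;s)$ from (\ref{symuv}) with conjugation, which carries the island-minus-enclaves region in $\sigma>1/2$ bijectively to its counterpart in $\sigma<1/2$ while inverting the modulus.

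For assertion~(2) I will use the factored form (\ref{nsec1}) and read off its zero and pole sets: a zero of ${\cal V}_{\cal K}/{\cal V}$ forces $(1-{\cal U})(1-{\cal U}_{\cal K})=0$ and a pole forces $(1+{\cal U})(1+{\cal U}_{\cal K})=0$, so in every case $|{\cal U}|=1$ or $|{\cal U}_{\cal K}|=1$; by the Section~2 bound on ${\cal U}$ together with assertion~(1), neither modulus can equal unity off the critical line in the island-minus-enclaves region. Assertion~(3) is the same style of argument applied to the second identity in (\ref{nsec2}): ${\cal V}_{\cal K}/{\cal V}=1$ iff $1+{\cal U}(s){\cal U}_{\cal K}(1,1;s)=0$, and in $\sigma>1/2$ we have $|{\cal U}\,{\cal U}_{\cal K}|<1$ while in $\sigma<1/2$ we have $|{\cal U}\,{\cal U}_{\cal K}|>1$, so equality to $-1$ is excluded off the critical line. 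Assertion~(4) then follows from Remark~1 together with assertion~(1): the zero sets in question are characterised by ${\cal U}_{\cal K}(1,1;s)=1$ and ${\cal U}_{\cal K}(1,1;s)=-1$ respectively, each of which demands $|{\cal U}_{\cal K}(1,1;s)|=1$ and so, by~(1), can only occur on the critical line.

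The main obstacle I foresee is bookkeeping at common zeros of ${\cal V}(s)$ and ${\cal V}_{\cal K}(1,1;s)$—equivalently, simultaneous vanishings of paired factors in the numerator and denominator of (\ref{nsec1})—where spurious cancellations could in principle hide an off-critical-line zero or pole of the ratio. Passing to the ${\cal U}$-variables makes this transparent: every zero/pole condition reduces to one of the four linear constraints ${\cal U}=\pm 1$ or ${\cal U}_{\cal K}=\pm 1$, and the two modulus estimates rule each of them out individually off the critical line. A related care point is the implicit $t$-threshold inherited from the Section~2 bound on ${\cal U}$; consistent with the rest of the paper, the statement should be understood for islands lying sufficiently high in $t$ that this bound applies.
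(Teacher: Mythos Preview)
Your proposal is correct and follows essentially the same route as the paper: assertion~(1) from the island/enclave construction in Theorem~\ref{thmisland}, assertion~(2) from the factored form~(\ref{nsec1}) plus the two modulus bounds, assertion~(3) from the product modulus $|{\cal U}\,{\cal U}_{\cal K}|$ via~(\ref{nsec2}), and assertion~(4) from the identification of zeros of ${\cal K}(0,0;s)$ and ${\cal K}_\lambda(0,0;s)$ with the conditions ${\cal U}_{\cal K}=\pm 1$. The only cosmetic difference is that for~(4) the paper routes through assertion~(2) and equation~(\ref{etf25}) rather than invoking Remark~1 and assertion~(1) directly as you do; your path is marginally cleaner, and your explicit acknowledgement of the cancellation issue and the $t$-threshold is a nice touch the paper omits.
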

\begin{proof}
Proposition (1) follows from the construction of the island regions, which have the property that $|{\cal U}_{\cal K}(1,1;s)|=1$ on their boundaries, and  in $\sigma>1/2$, $|{\cal U}_{\cal K}(1,1;s)|<1$ properly within them. As remarked in Section II, the two inequalities in Proposition (1) apply everywhere to ${\cal U}(s)$ \cite{ki,lagandsuz}.

Proposition (2)  follows from Proposition (1) and the factored representation for ${\cal V}_{\cal K}(1,1;s)/{\cal V}(s)$ in equation (\ref{nsec1}). Proposition (3) follows from Proposition (1), which leads to the conclusion that within an island region
$|{\cal U}(s){\cal U}_{\cal K}(1,1;s)|=1$ only on the critical line.  Its second part follows from the first part, since if at $s_0$ we have
${\cal V}_{\cal K}(1,1;s_0)={\cal V}(s_0)$, then from equations (\ref{Vdef}) and (\ref{etf24a}), ${\cal U}_{\cal K}(1,1;s_0)=-1/{\cal U}(s_0)$.
Proposition (4) follows from Proposition (2) and equation (\ref{etf25}).
\label{thmnsec}
\end{proof}

\begin{corollary}
Every island region of bounded extent must include a segment of the critical line.
\label{icorollary}
\end{corollary}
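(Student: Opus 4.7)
The plan is a short topological argument using three structural features of an island region that are already built into Theorem~\ref{thmisland}: it is bounded, it is a connected open subset of $\mathbb{C}$, and it is invariant under the reflection $\rho\colon s\mapsto 1-\overbar{s}$ across the critical line. The argument proceeds by contradiction: assume an island region $R$ of bounded extent contains no point of the critical line, and derive an immediate conflict with the reflection symmetry.

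The first step is to remove the critical line and observe that $\mathbb{C}\setminus\{\sigma=1/2\}$ is the disjoint union of the two open half-planes $H_+=\{\sigma>1/2\}$ and $H_-=\{\sigma<1/2\}$. Under the contradiction hypothesis, $R\subset H_+\cup H_-$, and because $R$ is connected it must be contained in exactly one of these half-planes; relabelling if necessary we may assume $R\subset H_+$.

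The second step is to invoke the invariance $\rho(R)=R$, which is part of the construction of island regions in Theorem~\ref{thmisland}. The map $\rho$ is a homeomorphism of $\mathbb{C}$ sending $H_+$ bijectively onto $H_-$, so $\rho(R)$ is a nonempty subset of $H_-$. Combined with $\rho(R)=R$, this forces $R\subset H_-$, directly contradicting $R\subset H_+$ for nonempty $R$.

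Hence $R$ must meet the critical line. Since $R$ is open, the intersection $R\cap\{\sigma=1/2\}$ is an open subset of the critical line, and therefore a nonempty union of open intervals; any such interval is the sought segment. The only subtlety in the whole argument is conceptual rather than computational: one must be comfortable that the island region of Theorem~\ref{thmisland} really is a single connected open $\rho$-invariant set, as opposed to a loose pair consisting of a bounded component of $\{|{\cal U}_{\cal K}(1,1;s)|<1\}$ in $H_+$ and its mirror in $H_-$. Once that viewpoint is adopted, the corollary is just the elementary observation that a connected set symmetric about a line cannot be disjoint from that line.
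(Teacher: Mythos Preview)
Your argument is clean, but the ``subtlety'' you flag at the end is not a matter of viewpoint: it is the entire content of the corollary, and your proof assumes it rather than establishes it.

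Concretely: on the critical line one has $|{\cal U}_{\cal K}(1,1;1/2+it)|\equiv 1$, since $1-s=\bar s$ there and ${\cal K}(1,1;\bar s)=\overline{{\cal K}(1,1;s)}$. Consequently the open set $\{|{\cal U}_{\cal K}(1,1;s)|<1\}$ never meets $\sigma=1/2$; each of its bounded components lies entirely in one half-plane. The ``island region symmetric under $\rho$'' of Theorem~\ref{thmisland} is, by construction, such a component in $H_+$ together with its $\rho$-image in $H_-$. If the closure of the $H_+$ component failed to touch the critical line, these two pieces would be disjoint open sets whose union is $\rho$-invariant but \emph{not connected}. Your step ``by connectedness $R\subset H_+$'' would then be unavailable, and the argument collapses. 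To assume the union is connected is to assume the component abuts $\sigma=1/2$, which is precisely what the corollary asserts.

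The paper's proof addresses exactly this scenario and cannot be replaced by topology alone. It supposes a bounded component ${\cal I}_*\subset H_+$ with $|{\cal U}_{\cal K}|=1$ on $\partial{\cal I}_*$ and $|{\cal U}_{\cal K}|<1$ inside, whose closure misses the critical line. Since zeros of ${\cal K}_\lambda(0,0;s)$ within an island can only occur on $\sigma=1/2$ (Theorem~\ref{frthm1}, property (4)), ${\cal I}_*$ contains no poles of ${\cal V}_{\cal K}(1,1;s)$, i.e.\ no points with ${\cal U}_{\cal K}=-1$. An argument-principle count then forces ${\cal U}_{\cal K}$ to have both zeros \emph{and} poles inside ${\cal I}_*$ (only zeros, or only poles, would make $\arg{\cal U}_{\cal K}$ sweep through $-1$ on $\partial{\cal I}_*$), which in turn produces internal level curves $|{\cal U}_{\cal K}|=1$ and hence strictly smaller sub-islands to which the same reasoning applies. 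With only finitely many zeros and poles available, this descent terminates in a contradiction. That analytic input---the location of zeros of ${\cal K}_\lambda$ and the pole/zero bookkeeping for ${\cal U}_{\cal K}$---is what actually carries the proof; your symmetry argument does not supply it.
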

\begin{proof}
Consider an island of bounded extent, say ${\cal I}_*$, not including a segment of the critical line. Then it has no zeros of ${\cal K}_\lambda(0,0,s)$ either within it or on its boundary. We may take ${\cal I}_*$ to lie in $\sigma>1/2$, there being its image in $\sigma<1/2$ under
$s\rightarrow 1-\bar{s}$.

On the boundary of ${\cal I}_*$, ${\cal U}_{\cal K}(1,1;s)$ has modulus unity, while ${\cal V}_{\cal K}(1,1;s)$ is purely imaginary.
On the boundary of ${\cal I}_*$ and within it, ${\cal V}_{\cal K}(1,1;s)$ can not have poles. Within ${\cal I}_*$ ${\cal U}_{\cal K}(1,1;s)$
must have a mixture of poles and zeros. Indeed, if it had none of either, it would have to be constant, while if it had only poles or only zeros, its argument on the boundary of ${\cal I}_*$ would vary monotonically through a multiple of $2\pi$, thus passing through
${\cal U}_{\cal K}(1,1;s)=-1$, and requiring ${\cal V}_{\cal K}(1,1;s)$ to have a pole.

As ${\cal U}_{\cal K}(1,1;s)$ is required to have a mixture of poles and zeros within ${\cal I}_*$, there must exist there lines of unit modulus. These must close around individual poles or zeros (while possibly including a segment of the boundary of ${\cal I}_*$).
We then have other internal islands to which we can apply the same argument. As there can only be a finite number of poles or zeros within ${\cal I}_*$, this gives a contradiction.
\end{proof}

In discussing the location of zeros and poles of ${\cal V}_{\cal K}(1,1;s)$ and of ${\cal V}(s)$, and their relationship to the zeros of
${\cal L}(s)$ and of ${\cal S}_0(s)$, we will adopt the notation that zeros of functions are indicated by an abbreviated function symbol.
Thus, if an interval of the critical line, zeros of the three functions ${\cal K}_\lambda(0,0;s)$, ${\cal T}_-(s)$ and ${\cal L}(s)$ occur in that order as $t$ increases, we will denote that as $({\cal K}_\lambda,{\cal T}_-,{\cal L})$ and if a zero of  ${\cal T}_+(s)$ precedes
that of ${\cal K}(0,0;s)$ we will denote that as $({\cal T}_+,{\cal K})$. We will further refine the notation by using round brackets for inner island intervals, square brackets for enclave intervals, and bra-ket brackets for the remaining intervals within islands. 

Using these conventions, the eleven intervals in the island around $t=355$ of paper II are denoted:\\
$<{\cal K}_{\lambda},{\cal T}_-,{\cal L}>$; $(L_{-4},{\cal K})$; $<{\cal T}_+,\zeta>$; $[{\cal K},{\cal T}_-,{\cal L},{\cal K}_\lambda]$;
$<{\cal T}_+, L_{-4}>$; $({\cal T}_-,{\cal K}_\lambda,{\cal L},{\cal K})$;\\
$<{\cal T}_+,\zeta>$;  $({\cal T}_-,{\cal K}_\lambda,{\cal L})$; $<{\cal K},{\cal T}_+,\zeta>$;  $({\cal T}_-,{\cal K}_\lambda,{\cal L})$ and
$<L_{-4}>$.\\
These structures can be understood with the aid of asymptotic analysis, for $t$ not small, with its results confirmed for small $t$ by numerical validation.

\begin{theorem} Inside islands, zeros of ${\cal L}(s)$ occur in the two structures: ${\cal K}_{\lambda},{\cal T}_-,{\cal L}$ and 
${\cal T}_-,{\cal K}_\lambda,{\cal L}$. In enclaves or between islands, zeros of ${\cal L}(s)$ occur in the structure ${\cal T}_-,{\cal L},{\cal K}_\lambda$, while within an enclave a zero of ${\cal K}(0,0;s)$ occurs as well. Consequently, there is a one-to-one mapping between zeros on the critical line of ${\cal K}_\lambda(0,0;s)$, ${\cal T}_-(s)$ and ${\cal L}(s)$.
\label{aboutL}
\end{theorem}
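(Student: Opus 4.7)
The plan is to characterise each of the three zero-sets on the critical line through the two pure-imaginary auxiliary functions ${\cal V}(1/2+it)=iv(t)$ and ${\cal V}_{\cal K}(1,1;1/2+it)=iv_K(t)$, and then read off the orderings from two independent monotonicity arguments. From (\ref{mac2s5}), together with (\ref{mac2s1}), (\ref{mac2s1a}), (\ref{Vdef}) and (\ref{etf25}), ${\cal L}$ admits the two factorisations
\[
{\cal L}(s)=2{\cal T}_-(s)\bigl[{\cal V}(s)+(2s-1)\bigr]=-4{\cal K}_\lambda(0,0;s)\bigl[1-{\cal V}_{\cal K}(1,1;s)/(2s-1)\bigr],
\]
so the zeros of ${\cal L}$ on the critical line are exactly the $t$ with $v(t)=-2t$, equivalently $v_K(t)=2t$. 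These identities also give ${\cal L}(s_0)=2{\cal T}_+(s_0)$ at a zero of ${\cal T}_-$ and ${\cal L}(s_0)=-2{\cal K}(0,0;s_0)$ at a zero of ${\cal K}_\lambda$, which (on the generic assumption, supported by the tabulations of I and II, that ${\cal K}(0,0;s)$ and ${\cal K}_\lambda(0,0;s)$ have no common zero) keeps the three zero-sets pairwise disjoint on the critical line.

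The first monotonicity comes from writing ${\cal U}(1/2+it)=e^{i\Theta(t)}$ and using (\ref{Vdef}) to get $v(t)=\cot(\Theta(t)/2)$; Ki's theorem \cite{ki} gives $\Theta'(t)<0$ for $t>7$, so on each branch delimited by two consecutive zeros of ${\cal T}_-$ (where $\Theta\equiv 0\pmod{2\pi}$, the poles of $v$) the function $v$ increases monotonically from $-\infty$ through $0$ (at the intervening ${\cal T}_+$-zero, where $\Theta\equiv\pi\pmod{2\pi}$) to $+\infty$. Because $-2t<0$, the unique crossing $v=-2t$ falls strictly before the ${\cal T}_+$-zero, so every $v$-branch displays the local pattern ${\cal T}_-,{\cal L},{\cal T}_+$ irrespective of region type. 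In parallel, the Schwarz relation $\overline{{\cal K}(1,1;s)}={\cal K}(1,1;1-s)$ on the critical line forces $|{\cal U}_{\cal K}(1,1;1/2+it)|=1$, so one may set ${\cal U}_{\cal K}(1,1;1/2+it)=e^{i\Theta_K(t)}$ and obtain $v_K(t)=\tan(\Theta_K(t)/2)$, with poles at ${\cal K}_\lambda$-zeros and zeros at ${\cal K}(0,0)$-zeros. Inside an island and outside any enclave, (\ref{srep15}) gives $\Theta_K'<0$, so $v_K$ decreases monotonically on each branch from $+\infty$ through $0$ to $-\infty$ and the target value $v_K=2t>0$ is crossed before the ${\cal K}(0,0)$-zero, producing the local pattern ${\cal K}_\lambda,{\cal L},{\cal K}(0,0)$. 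In an enclave or extended region, (\ref{srep15en}) reverses this to $\Theta_K'>0$ and $v_K$ increases from $-\infty$ through $0$ to $+\infty$, so $v_K=2t$ is reached only after the ${\cal K}(0,0)$-zero, giving the reversed local pattern ${\cal K}_\lambda,{\cal K}(0,0),{\cal L}$.

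Superimposing the two independent orderings around each ${\cal L}$-zero now yields the theorem. Inside an island, both ${\cal T}_-$ and ${\cal K}_\lambda$ immediately precede ${\cal L}$, so the only possibilities for the local triple are ${\cal K}_\lambda,{\cal T}_-,{\cal L}$ and ${\cal T}_-,{\cal K}_\lambda,{\cal L}$, the choice being determined by whether $v=-2t$ or $v_K=2t$ is reached first as $t$ increases. In an enclave or extended region, ${\cal T}_-$ still precedes ${\cal L}$ while ${\cal K}_\lambda$ now comes after, forcing the structure ${\cal T}_-,{\cal L},{\cal K}_\lambda$; within an enclave the ${\cal K}(0,0)$-zero on the same $v_K$-branch sits before ${\cal L}$ as well. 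The one-to-one correspondence is then immediate: each $v$-branch is bounded by two consecutive zeros of ${\cal T}_-$ and contains exactly one zero of ${\cal L}$, and each $v_K$-branch is bounded by two consecutive zeros of ${\cal K}_\lambda$ and again contains exactly one zero of ${\cal L}$, so pairing each ${\cal L}$-zero with the ${\cal T}_-$-zero at the left end of its $v$-branch and with the ${\cal K}_\lambda$-zero at the appropriate end of its $v_K$-branch supplies the desired bijection.

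The chief obstacle is a global justification of both monotonicities. Ki's bound $\Theta'(t)<0$ is proved only for $t>7$, so a finite list of small-$t$ zeros must be treated separately, as in the numerical surveys of I and II. More delicate still is the requirement that $\Theta_K'(t)$ keeps a fixed sign throughout the entire interior of each island, enclave, or extended region, rather than merely at its boundary where (\ref{srep15}) was derived via Cauchy-Riemann in Theorem \ref{thmisland}; the asymptotic form (\ref{isleq-3}) and the tabulations of II support this, but securing it rigorously in every region is the delicate step on which a fully airtight version of the theorem rests.
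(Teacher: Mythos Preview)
Your treatment of the ${\cal T}_-$--${\cal L}$ ordering via the globally monotone $v(t)=\cot(\Theta(t)/2)$ is clean and correct: Ki's theorem guarantees $\Theta'<0$, so each $v$-branch runs monotonically from $-\infty$ to $+\infty$, the large negative target $-2t$ is hit exactly once, close to the left ${\cal T}_-$-endpoint, and this legitimately yields both the ordering ${\cal T}_-,{\cal L}$ and a bijection between ${\cal T}_-$-zeros and ${\cal L}$-zeros.

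The parallel argument for $v_K$ breaks down. You tacitly assume each $v_K$-branch (the interval between consecutive ${\cal K}_\lambda$-zeros) lies in a single region type, so that $\Theta_K'$ has fixed sign and $v_K$ is monotone there. The paper's own worked example near $t=355$ refutes this: the branch running from the ${\cal K}_\lambda$-zero in the first interval $\langle{\cal K}_\lambda,{\cal T}_-,{\cal L}\rangle$ to the next ${\cal K}_\lambda$-zero in $[{\cal K},{\cal T}_-,{\cal L},{\cal K}_\lambda]$ traverses outer-island, inner-island, outer-island, and enclave sub-intervals in succession; along it $\Theta_K$ first decreases then increases, $v_K$ is non-monotone, and the branch carries \emph{two} ${\cal K}(0,0)$-zeros and \emph{two} ${\cal L}$-zeros. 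Your claimed bijection ``one ${\cal L}$-zero per $v_K$-branch'' therefore fails outright, as does the branch-based ordering. Your closing paragraph mis-locates the obstruction: the sign of $\Theta_K'$ is constant within each region \emph{by definition} (the critical-line regions are cut out precisely by that sign); the genuine difficulty is that $v_K$-branches need not respect region boundaries.

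The paper evades this by arguing locally rather than branch-globally. It first observes that at an ${\cal L}$-zero one has $v_K=2t\gg 1$, which forces that zero to sit within $O(1/t)$ of a ${\cal K}_\lambda$-zero, and then takes linear expansions $\arg{\cal U}\sim -\alpha_{\cal U}(t-t_-)$ and $\arg{\cal U}_{\cal K}\sim\pi-\alpha_{{\cal K}\lambda}(t-t_{{\cal K}\lambda})$ in that tight window, reading off the relative order of $t_-$, $t_{{\cal K}\lambda}$, $t_{\cal L}$ from the sign and relative size of the local slope $\alpha_{{\cal K}\lambda}$ (equation (\ref{aboutL5})). Because the triple is confined to an $O(1/t)$ neighbourhood, it lies within a single region type and the local sign of $\Theta_K'$ suffices. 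Your $v_K$ argument can be repaired by making the same localisation explicit---replacing ``on each branch'' by ``in an $O(1/t)$ neighbourhood of the ${\cal L}$-zero''---but as written the step is not valid.
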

\begin{proof}
If ${\cal T}_-(s)=0$ then $\arg {\cal U}(s)=0$. If ${\cal L}(s)=0$ then from equation (\ref{mac2s5}), if $t>>1$
\begin{equation}
{\cal U}(s)\sim 1-\frac{i}{t},~~\arg {\cal U}(s)\sim -\frac{1}{t}.
\label{aboutL1}
\end{equation}
As $\arg {\cal U}(1/2+i t)$ monotonically decreases as $t$ increases\cite{ki}, then ${\cal T}_-(1/2+i t)=0$ will precede
${\cal L}(1/2+i t)=0$ by an amount which decreases as $t$ increases.

When ${\cal L}(s)=0$, then ${\cal V}_{\cal K}(1,1;s)=-{\cal V}(s)$, while the leading order term for the latter
is 
\begin{equation}
{\cal V}(s)=\frac{1+{\cal U}(s)}{1-{\cal U}(s)}\sim -2 i t.
\label{aboutL2}
\end{equation}
Hence, zeros of ${\cal L}(s)$ are close to poles of ${\cal V}_{\cal K}(1,1;s)$, i.e. to zeros of ${\cal K}_\lambda (0,0;s)$ on the critical line. We next expand argument functions by linear approximations:
\begin{equation}
\arg {\cal  U}(1/2+i t)\sim -\alpha_{\cal U} (t-t_-), \arg {\cal  U}_{\cal K} (1,1;1/2+i t) \sim \pi -\alpha_{{\cal K}\lambda} (t-t_{{\cal K}\lambda}),
\label{aboutL3}
\end{equation}
where the slope $\alpha_{\cal U}$ is always positive in $t>7/2$, while the slope $\alpha_{{\cal K}\lambda}$ is positive in islands but not in enclaves, and is negative in enclaves and between islands. Hence,
\begin{equation}
\frac{{\cal V}_{\cal K}(1,1;1/2+i t)}{{\cal V}(1/2+i t)}=\tan[1/2 \arg {\cal  U}_{\cal K}(1,1;1/2+i t)] \tan[1/2 \arg {\cal U}(1/2+i t)]\sim -\frac{\alpha_{\cal U}}{\alpha_{{\cal K}\lambda}}
\frac{(t-t_-)}{(t-t_{{\cal K}\lambda})}.
\label{aboutL4}
\end{equation}
We now require that ${\cal V}_{\cal K}(1,1;1/2+i t)/{\cal V}(1/2+i t)=-1$ when $t=t_{\cal L}$, and solve to obtain
\begin{equation}
t_{\cal L}-t_-=\left( \frac{\alpha_{{\cal K}\lambda}}{\alpha_{\cal U}}\right) (t_{\cal L}-t_{{\cal K}\lambda}).
\label{aboutL5}
\end{equation}
Hence, if $\alpha_{{\cal K}\lambda}$ is positive, then knowing the left-hand side in  (\ref{aboutL5}) is positive, $t_{\cal L}>t_{{\cal K}\lambda}$
(islands, but not enclaves), while if $\alpha_{{\cal K}\lambda}$ is negative, $t_{\cal L}<t_{{\cal K}\lambda}$ (between islands and  also in enclaves).

We can also be more specific about the order of zeros and poles within islands. Rewrite (\ref{aboutL5}) as
\begin{equation}
t_-=\left(1-\frac{\alpha_{{\cal K}\lambda}}{\alpha_{\cal U}}\right) t_{\cal L}+\frac{\alpha_{{\cal K}\lambda}}{\alpha_{\cal U}}t_{{\cal K}\lambda}.
\label{aboutL6}
\end{equation}
Now, in islands, the intervals between inner islands are those in which $\arg {\cal U}(1/2+i t)$ decreases more rapidly as $t$ increases than does $\arg {\cal U}_{\cal K}(1,1;1/2+i t)$, while inside inner islands the latter decreases more rapidly than the former.
Hence, between inner islands (\ref{aboutL6}) gives $t_{{\cal K}\lambda}<t_-<t_{\cal L}$, i.e. ${\cal K}_{\lambda},{\cal T}_-,{\cal L}$.
Inside inner islands, $t_-<t_{{\cal K}\lambda}<t_{\cal L}$, i.e. ${\cal T}_-,{\cal K}_\lambda,{\cal L}$.

Next, consider the case of an enclave. The boundary of the enclave in $\sigma\ge1/2$ is given by $|{\cal U}_{\cal K}(1,1;s)|=1$
or $\arg {\cal V}_{\cal K}(1,1;s)=\pm \pi/2$. It contains a single pole within it, so that $\arg {\cal U}_{\cal K}(1,1;s)$ takes all values from $0$ to $2\pi$ round the boundary. Hence, it has a point corresponding to ${\cal L}(s)=0$ on its boundary, and in fact on $\sigma=1/2$. Thus, it contains the triplet ${\cal T}_-,{\cal L},{\cal K}_\lambda$. The point on the enclave boundary corresponding
to $\arg  {\cal U}_{\cal K}(1,1;s)=0$ corresponds to ${\cal K}(0,0;s)=0$. Given the point where ${\cal K}(0,0;s)=0$ is close to that where ${\cal K}_\lambda=0$ (closeness being defined by comparison with $\delta t$ for the enclave), then the whole of the enclave boundary in $\sigma>1/2$ and the greater part of the critical line segment in the enclave will have a common argument for ${\cal V}_{\cal K}(1,1;s)$.

The one-to-one correspondence between zeros on the critical line of ${\cal K}_\lambda(0,0;s)$, ${\cal T}_-(s)$ and ${\cal L}(s)$ referred to in the theorem statement  is provided by their joint occurrence in sets of three (or triples).
\end{proof}
\begin{figure}[htb]
\includegraphics[width=2.5in]{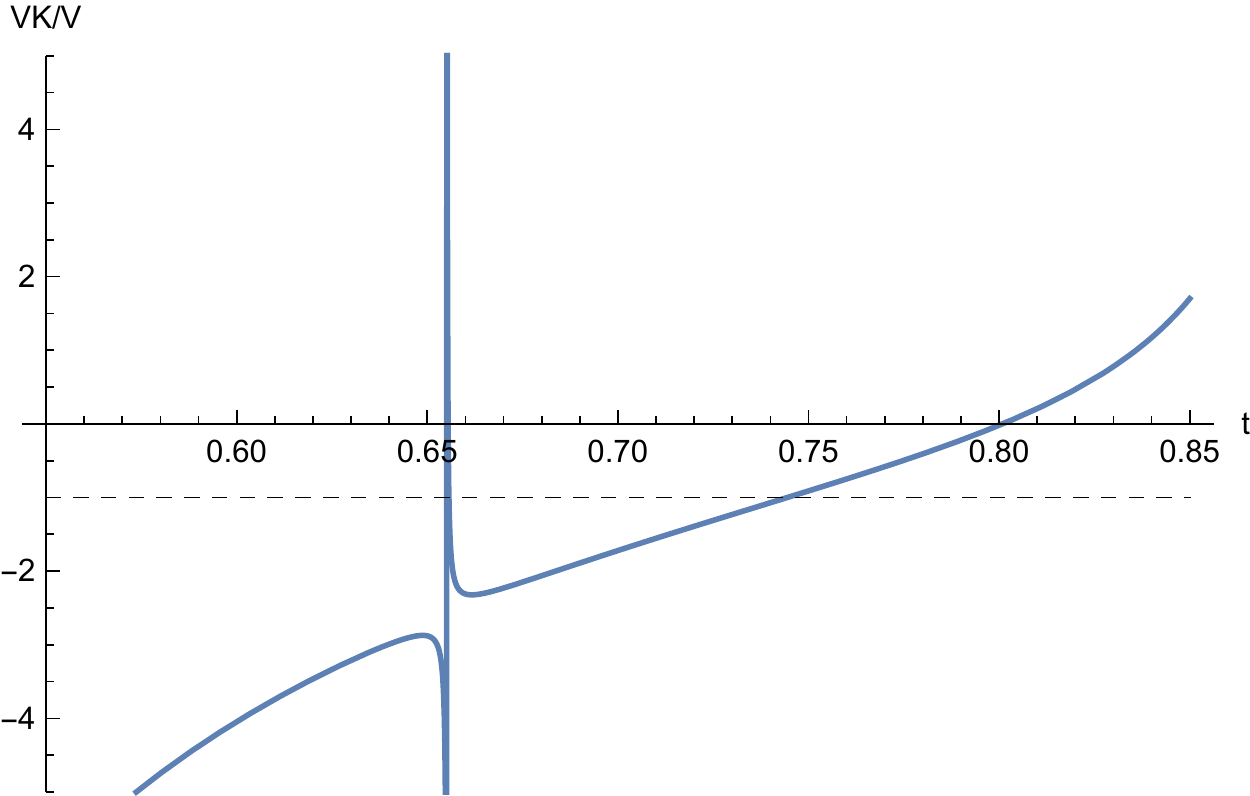}~~\includegraphics[width=2.5in]{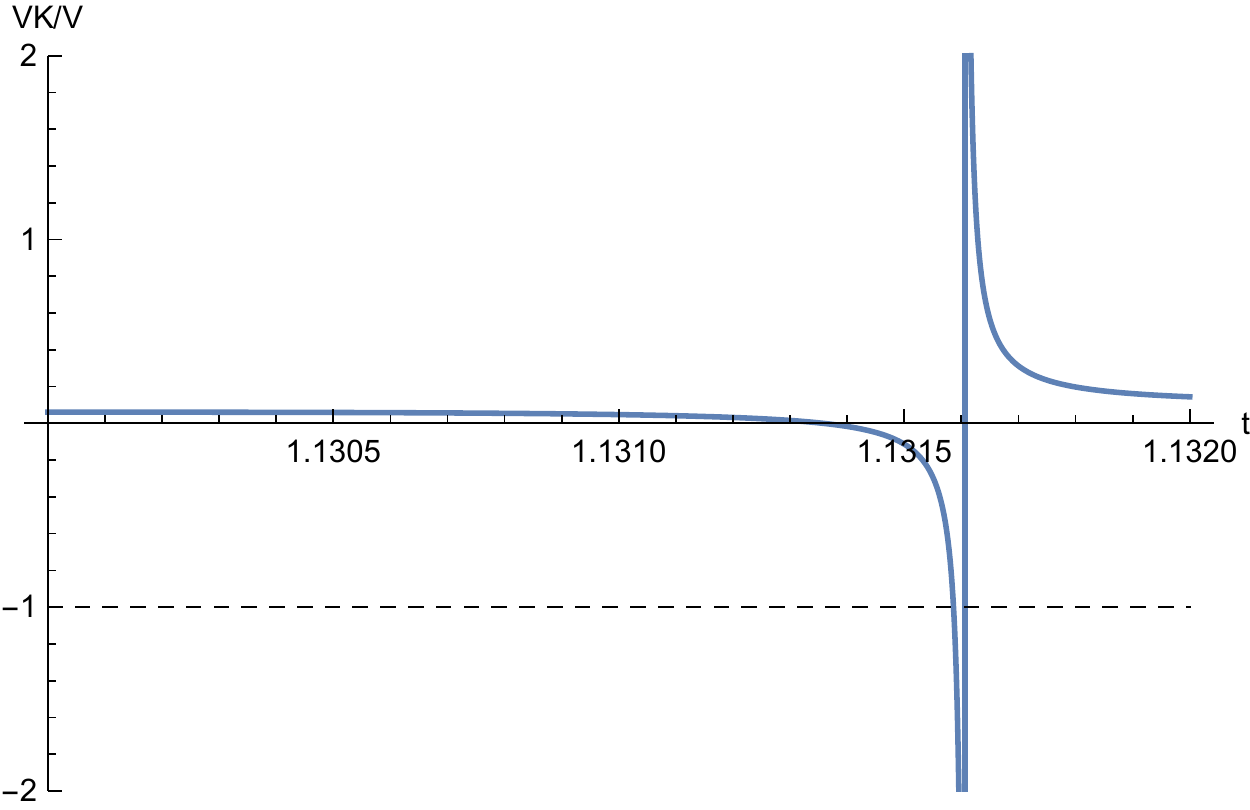}\\
\includegraphics[width=2.5in]{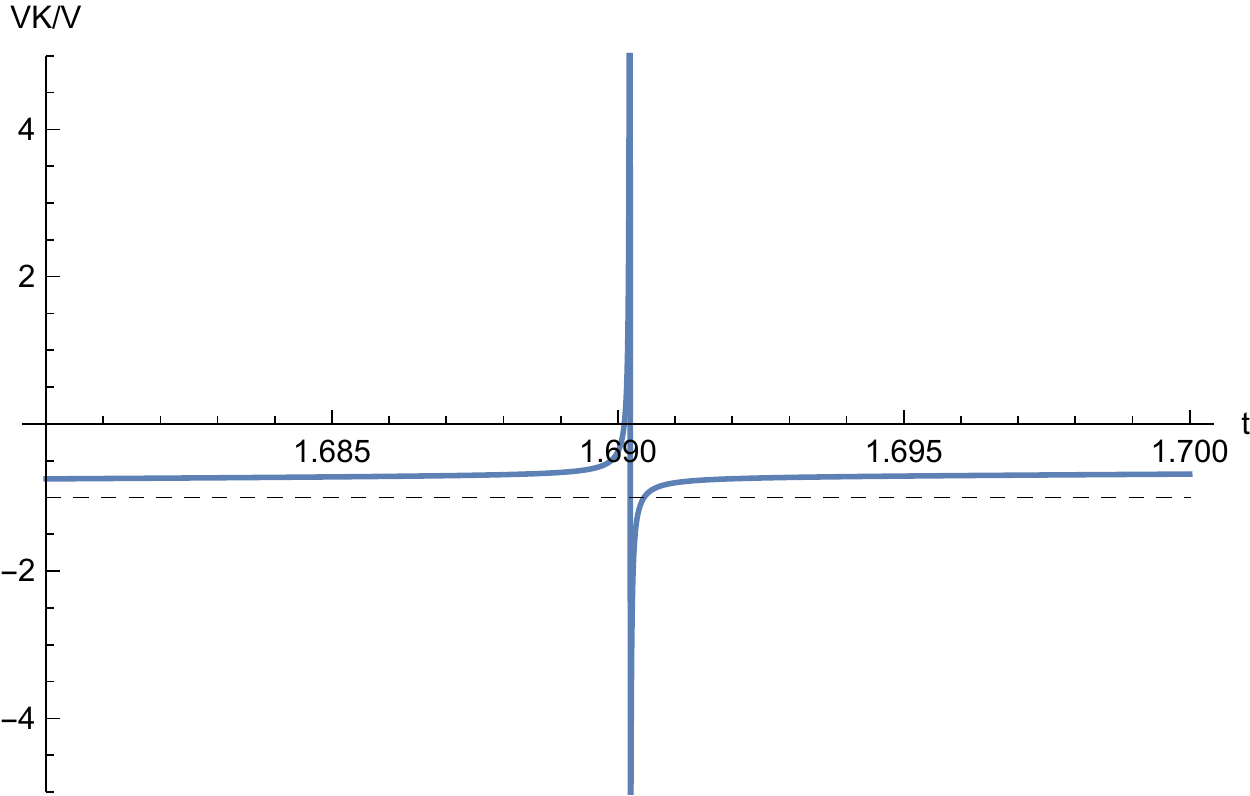}
\caption{Plots of ${\cal V}_{\cal K}(1,1;1/2+i(355+t))/{\cal V}(1/2+i(355+t))$ versus $t$ for thee intervals in an island, showing
details of triple zeros: top left:${\cal K}_{\lambda},{\cal T}_-,{\cal L}$; top right:  ${\cal T}_-,{\cal L},{\cal K}_\lambda$;
bottom:${\cal T}_-,{\cal K}_\lambda,{\cal L}$.}
\label{figtrips1}
\end{figure}

Three examples of triples in the region of the critical line following $t=355$ are given in Fig. \ref{figtrips1}. Of the five occurrences of triples in this island, those in Fig. \ref{figtrips1}(a),(b) occur once, while that in Fig. \ref{figtrips1}(c) occurs three times.

\begin{theorem}
Let $N_Z({\cal K}_\lambda)$, $N_Z({\cal K})$, $N_Z({\cal U}_{\cal K})$ , $N_Z({\cal U})$,$N_Z({\cal T_+})$  and $N_Z({\cal T_-})$ denote the numbers of zeros
of respectively ${\cal K}_\lambda (0,0;s)$, ${\cal K} (0,0;s)$, ${\cal U}_{\cal K}(1,1;s)$, ${\cal U}(s)$, ${\cal T}_+(s)$  and ${\cal T}_-(s)$ in $\sigma \ge 1/2$ in an island region having $N_{\cal E}$ enclaves. Then
\begin{equation}
N_Z({\cal K}_\lambda)=N_Z({\cal K})=N_Z({\cal U}_{\cal K})+N_{\cal E}=N_Z({\cal T}_-),
\label{islzct1}
\end{equation}
and 
\begin{equation}
N_Z({\cal U})=\Big\lfloor  \frac{N_Z({\cal T_+})+N_Z({\cal T_-})}{2}  \Big\rfloor .
\label{islzct1a}
\end{equation}
\label{islzct}
\end{theorem}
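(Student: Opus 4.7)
The plan is to prove the two displayed identities by combining the triple-counting of Theorem \ref{aboutL} with the argument principle applied to ${\cal U}_{\cal K}(1,1;s)$ and ${\cal V}_{\cal K}(1,1;s)$ on the right half of an island region. For the first chain $N_Z({\cal K}_\lambda)=N_Z({\cal T}_-)$, I would invoke Theorem \ref{aboutL} directly: critical-line zeros of ${\cal K}_\lambda(0,0;s)$, ${\cal T}_-(s)$ and ${\cal L}(s)$ occur together in triples of three possible orderings, one per interval type (outer, inner-island, enclave), with exactly one zero of each function per triple. Since Theorem \ref{frthm1}(4) confines all zeros of ${\cal K}$ and ${\cal K}_\lambda$ in $\sigma\ge 1/2$ to the critical line inside islands outside enclaves, and the enclave case is handled inside the proof of Theorem \ref{aboutL}, counting triples delivers the equality.

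For $N_Z({\cal K})=N_Z({\cal K}_\lambda)$, I would use the factorisation (\ref{etf25}): in $\sigma\ge 1/2$, zeros of ${\cal V}_{\cal K}(1,1;s)$ are zeros of ${\cal K}(0,0;s)$ and its poles are zeros of ${\cal K}_\lambda(0,0;s)$. By Theorem \ref{frthm1}(2) both sets lie on the critical line, and by Remark 2 they strictly alternate in $t$. Since $|{\cal U}_{\cal K}|=1$ on the right-half island boundary and also on the critical line, ${\cal V}_{\cal K}$ is purely imaginary on both arcs. Applying the argument principle to ${\cal V}_{\cal K}$ on the closed contour formed by the right-half boundary joined to the critical-line segment (indented around the zeros and poles) then constrains the net winding to vanish, forcing the interior zero and pole counts to coincide.

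For $N_Z({\cal U}_{\cal K})+N_{\cal E}=N_Z({\cal K}_\lambda)$, I would apply the argument principle to ${\cal U}_{\cal K}(1,1;s)$ on the same contour: inner-island cores contribute zeros and enclave cores contribute poles in $\sigma>1/2$, giving an interior count of $N_Z({\cal U}_{\cal K})-N_{\cal E}$. On the boundary $\arg{\cal U}_{\cal K}$ increases monotonically by Theorem \ref{thmisland}, and on the critical line it decreases, passing through $\pi$ at each zero of ${\cal K}_\lambda$ and through $0$ at each zero of ${\cal K}$; the linearised slope analysis in the proof of Theorem \ref{aboutL} together with the triple count then converts the critical-line contribution into an explicit multiple of $2\pi$, which closes the identity. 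The floor formula for $N_Z({\cal U})$ then follows from Ki's monotonic decrease of $\arg{\cal U}(1/2+it)$ \cite{ki}: each complete winding of ${\cal U}$ round the unit circle produces exactly one zero of ${\cal T}_+$ (at $-1$) and one of ${\cal T}_-$ (at $+1$), and a terminal partial winding contributes at most one extra zero of exactly one of them. Since zeros of ${\cal U}$ in $\sigma\ge 1/2$ are the images of Riemann zeros under $s\mapsto(s+1)/2$ while its poles lie in $\sigma<1/2$, the argument principle on ${\cal U}$ over the same contour recovers $N_Z({\cal U})$ as the number of complete windings, namely $\lfloor(N_Z({\cal T}_+)+N_Z({\cal T}_-))/2\rfloor$.

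The main obstacle will be the careful bookkeeping of boundary contributions: the precise counting of zeros and poles near inner-island boundaries, enclave boundaries, and the top and bottom of the island on the critical line, together with the sign of each argument sweep. Converting the monotonicity statements for $\arg{\cal U}$ and $\arg{\cal U}_{\cal K}$ on the critical line into exact $2\pi$-increments per triple must be handled with care, especially when a zero sits at an interval endpoint or when an island abuts an enclave. Once these details are dispatched, the argument-principle accounting closes both identities.
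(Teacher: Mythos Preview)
Your proposal is correct in outline and uses the same core tools as the paper---the argument principle and the triple structure of Theorem~\ref{aboutL}---but organises the argument differently. The paper obtains the first three equalities in (\ref{islzct1}) in one stroke: it applies the argument principle to ${\cal U}_{\cal K}(1,1;s)$ on the contour $\Gamma_{1+}$ bounding the right half of the island \emph{with all enclaves excised}. On that contour $\arg{\cal U}_{\cal K}$ is monotonic throughout (Theorem~\ref{thmisland}), so the total winding $2\pi N_Z({\cal U}_{\cal K})$ forces exactly $N_Z({\cal U}_{\cal K})$ passages through argument $0$ (zeros of ${\cal K}(0,0;s)$) and through $\pi$ (zeros of ${\cal K}_\lambda(0,0;s)$), in alternation; adding the one-per-enclave zero of each then yields $N_Z({\cal K})=N_Z({\cal K}_\lambda)=N_Z({\cal U}_{\cal K})+N_{\cal E}$ directly. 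By contrast you work on a contour that retains the enclaves, so the critical-line segment is no longer monotone in $\arg{\cal U}_{\cal K}$ (cf.\ (\ref{srep15en})); this is precisely why you need the separate ${\cal V}_{\cal K}$ step and the endpoint and slope bookkeeping you flag as the main obstacle. Excising the enclaves first removes that obstacle entirely and makes the intermediate ${\cal V}_{\cal K}$ argument unnecessary. Your linking of $N_Z({\cal K}_\lambda)$ to $N_Z({\cal T}_-)$ via Theorem~\ref{aboutL}, and your derivation of the floor formula (\ref{islzct1a}) from the monotonicity of $\arg{\cal U}$ and the identification of zeros of ${\cal T}_\pm$ with ${\cal U}=\mp 1$, both match the paper.
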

\begin{proof}
Let $\Gamma_{1+}$ and $\Gamma_{2+}$ denote the boundary of the contour bounding the island in $\sigma \ge 1/2$, including a segment of the critical line and excluding all enclave regions, and the same contour in $\sigma \ge 1/2$ and including all enclave regions. We  apply the Argument Principle to ${\cal U}_{\cal K}(1,1;s)$ on $\Gamma_{1+}$, giving a change of argument round this contour of $2\pi N_Z({\cal U}_{\cal K})$, so that $N_Z({\cal U}_{\cal K})$ also gives the 
multiplicity of values of the monotonic argument function round the contour. It then follows that zeros of ${\cal K}(0,0;s)$ and of 
${\cal K}_\lambda (0,0;s)$ alternate round $\Gamma_{1+}$, and that the number of each is $N_Z({\cal U}_{\cal K})$. This is 
the number of zeros  on the critical line but excluding the enclaves.    Adding in the common number of zeros  from enclaves gives the equation (\ref{islzct1}).

Note that since zeros of ${\cal K}(0,0;s)$ and of 
${\cal K}_\lambda (0,0;s)$ alternate round $\Gamma_{1+}$, and since the latter only lie on the critical line, there can be at most one
zero of ${\cal K}(0,0;s)$ off the critical line. If there is one such, then the critical line segment pertaining to the island will have a zero of ${\cal K}_\lambda (0,0;s)$ towards either end.

All zeros of 
${\cal U}(s)$ inside the contours $\Gamma_{1+}$ and $\Gamma_{2+}$  lie outside the inner islands and enclaves, and by the Argument Principle the change of argument of ${\cal U}(s)$ round either contour is $2 \pi N_Z({\cal U})$. All zeros and poles of ${\cal V}(s)$ lie on the critical line, as indeed do the points where ${\cal U}(s)$  and ${\cal V}(s)$ both take the values $i$ or $-i$. Now the
zeros of ${\cal T}_+(s)$ correspond to ${\cal U}(s)=-1$, while zeros of ${\cal T}_-(s)$ correspond to ${\cal U}(s)=+1$, and both must occur within a segment of the critical line where the change of argument of ${\cal U}(s)$  is $2\pi$. This leads to the expression 
(\ref{islzct1a}) for $N_Z({\cal U})$.
\end{proof}

An example of an island having no zero of ${\cal U}(s)$ within it may be found near $t=116$. The island contains zeros of
${\cal U}_{\cal K}(1,1;s)$, ${\cal T}_-(s)$ and ${\cal L}(s)$. It does not contain a zero of ${\cal T}_+(s)$.

\begin{corollary}
Any island having no enclaves has all its zeros of $S_0(s)$ on the critical line.
\label{corollencl0}
\end{corollary}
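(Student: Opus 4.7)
The plan is to specialize Corollary \ref{c1lthm} to the single island ${\cal I}$ at hand. By Theorem \ref{lthm} it is enough to verify, for each inner island $m$ lying inside ${\cal I}$, the inequalities $\mu_u^{(m)}<0$ and $\mu_l^{(m)}>0$; zeros of $S_0(s)$ that do not sit on the boundary of any inner island are already confined to the critical line by Remark 4. The task thus reduces to producing, for every inner island of ${\cal I}$, a point on the critical line where $\arg[-{\cal F}(s)]=0$ both just above and just below it.

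The source of such points is Theorem \ref{aboutL}. Because ${\cal I}$ contains no enclaves, every outer-region interval of the critical line between two consecutive inner islands of ${\cal I}$ falls into the ``between inner islands inside an island'' case of that theorem, contributing the triple $({\cal K}_\lambda,{\cal T}_-,{\cal L})$ and hence a zero of ${\cal L}(s)$, equivalently a point where $\arg[-{\cal F}(s)]=0$. For the edge cases --- the segment just below the lowest and just above the highest inner island of ${\cal I}$ --- I would extend the walk past the boundary of ${\cal I}$ into the adjacent extended region, where the ``between islands'' clause of Theorem \ref{aboutL} supplies a triple $({\cal T}_-,{\cal L},{\cal K}_\lambda)$ and hence another zero of ${\cal L}(s)$ on the critical line just outside ${\cal I}$.

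With these zeros located, the argument used in the proof of Corollary \ref{c1lthm} transfers directly. The slope of $\arg {\cal F}$ on the critical line is positive in outer, extended and enclave regions by (\ref{srep26}) and (\ref{srep15en}), so $\arg[-{\cal F}(s)]$ is strictly increasing all the way from the ${\cal L}$ zero below $t_l^{(m)}$, through the boundary of ${\cal I}$, up to $t_l^{(m)}$, and likewise from $t_u^{(m)}$ up to the nearest ${\cal L}$ zero above; ascending from the former forces $\mu_l^{(m)}>0$ and descending to the latter forces $\mu_u^{(m)}<0$. The principal obstacle is this final bookkeeping step at the boundary of ${\cal I}$: one must confirm that the monotonicity of $\arg[-{\cal F}]$ really does survive the transition from ${\cal I}$ into the surrounding extended region, with no intervening $\arg[-{\cal F}(s)]=0$ point to reverse the signs. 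Continuity of $\arg {\cal F}$ on the critical line together with the common positive sign of its slope on both sides of the transition settle this, and Theorem \ref{lthm} then delivers the result.
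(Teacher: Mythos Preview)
Your route is genuinely different from the paper's. The paper does not invoke Theorem~\ref{lthm} or Corollary~\ref{c1lthm} at all; instead it works directly with the real-valued function ${\cal V}_{\cal K}(1,1;1/2+it)/{\cal V}(1/2+it)$ on the critical line. For each inner island that does not already contain a zero of ${\cal L}$, the paper records the ordered pattern of poles and zeros forced by the surrounding triples (${\cal K}_\lambda,{\cal T}_-,{\cal L}$ on either side, sandwiching ${\cal K}{\cal T}_+$ or ${\cal T}_+{\cal K}$), draws the resulting schematic graph, and applies the Intermediate Value Theorem to locate a point where the quotient equals $-1$ on the critical line. Your approach is cleaner conceptually --- it feeds everything through the $\mu_l,\mu_u$ machinery already built --- while the paper's is more concrete, exhibiting the zero of $S_0$ explicitly between identified poles and zeros.

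There is, however, a real weakness in your edge-case treatment. Theorem~\ref{aboutL} describes the \emph{structure} of the triples and establishes the one-to-one correspondence among zeros of ${\cal K}_\lambda$, ${\cal T}_-$ and ${\cal L}$; it does not assert that every extended region between two islands actually \emph{contains} such a triple. So ``the `between islands' clause of Theorem~\ref{aboutL} supplies a triple \ldots\ just outside ${\cal I}$'' is an over-reading: the nearest ${\cal L}$ zero below ${\cal I}$ could sit inside a neighbouring island, in which case the walk from it up to $t_l^{(m)}$ may pass through inner islands of that island, where $\partial\arg{\cal F}/\partial t<0$ and your monotonicity argument collapses. The paper avoids this entirely by staying inside ${\cal I}$: it asserts (and you could equally use) that the first and last outer intervals \emph{within} ${\cal I}$ already carry the triple ${\cal K}_\lambda,{\cal T}_-,{\cal L}$, so no excursion past the island boundary is needed. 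If you replace your edge-case step with that observation, your argument becomes self-contained and matches the paper's level of rigour.
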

\begin{proof}
We have an island having an arbitrary number of zeros $N_Z({\cal U})$, but with $N_{\cal E}=0$. The island begins and ends with intervals of the critical line not corresponding to an inner island, since the island begins and ends where $\arg {\cal U}_{\cal K}(1,1;1/2+i t)$ changes from increasing with $t$  to decreasing. For an inner island, it must decrease more rapidly than
$\arg {\cal U}(1/2+i t)$. We need only prove that inner islands not having ${\cal L}(s)=0$ on the critical line within them correspond to
$S_0(1/2+i t)=0$ for $t$ on the critical line. Such an inner island must have on either side an interval of the critical line with a triple of the form 
${\cal K}_{\lambda},{\cal T}_-,{\cal L}$, and the two intervals must sandwich either ${\cal K} {\cal T}_+$ or ${\cal T}_+{\cal K}$.
The order of these zeros in fact constrains the form of the graph of ${\cal V}_{\cal K}(1,1;s)/{\cal V}(s)$ with $s=1/2+i t$. A schematic
of the unique forms of each possibility is given in Fig. \ref{figencl0}. (In calculating such schematic graphs, the zeros and poles are specified in the desired order, while the intersection points at the level -1 follow by continuity of the graphs.)  In the first case, the zero of $S_0(s)$ occurs on the critical line
to the right of the pole corresponding to ${\cal T}_+(s)$ and the zero corresponding to ${\cal K}(0,0;s)$. In the second case, the zero of $S_0(s)$ occurs on the critical line just after the point where ${\cal L}(s)=0$, before the zero corresponding to ${\cal K}(0,0;s)$
and the pole corresponding to ${\cal T}_+(s)$. In both cases, the existence of a zero of $S_0(s)$ on the critical line is a consequence
of the Intermediate Value Theorem. In the first case, we have two first-order poles of ${\cal V}_{\cal K}(1,1;s)/{\cal V}(s)$ with only a single first-order zero between them; in the second case,  the function has passed from above -1 to below it, and is constrained to have a value tending towards positive infinity.
\end{proof}

\begin{figure}[htb]
\includegraphics[width=2.5in]{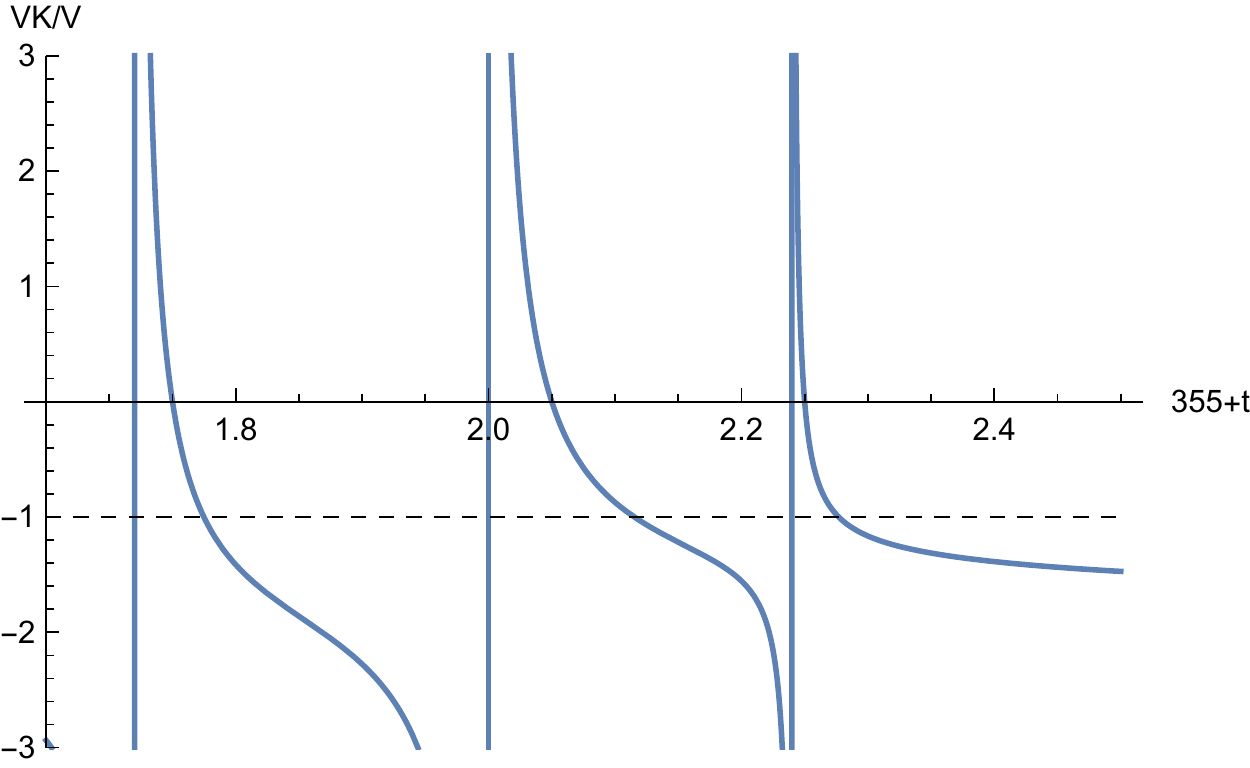}~~\includegraphics[width=2.5in]{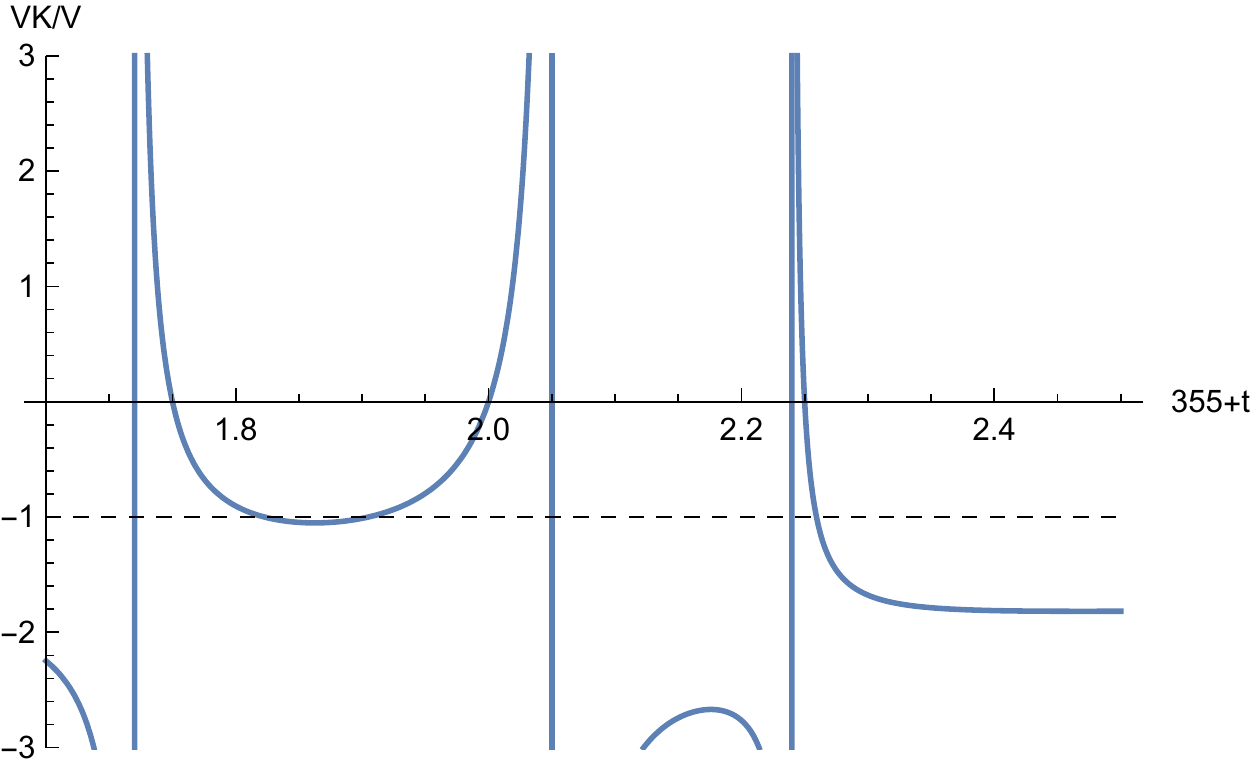}
\caption{Schematic plots of ${\cal V}_{\cal K}(1,1;1/2+i(355+t))/{\cal V}(1/2+i(355+t))$ versus $t$ , showing
the two cases: at left, ${\cal K}_{\lambda},{\cal T}_-,{\cal L}:{\cal T}_+,{\cal K}: {\cal K}_{\lambda},{\cal T}_-,{\cal L}$; at right,  
${\cal K}_{\lambda},{\cal T}_-,{\cal L}:{\cal K},{\cal T}_+: {\cal K}_{\lambda},{\cal T}_-,{\cal L}$. The dashed line intersects the continuous curves at three points: the first and third denote zeros of ${\cal L}(s)$, and the second denotes a zero of $S_0(s)$.}
\label{figencl0}
\end{figure}

\begin{corollary}
Any island having a single enclave within it has all its zeros of $S_0(s)$ on the critical line.
\label{corollencl1}
\end{corollary}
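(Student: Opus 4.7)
The plan is to extend the schematic analysis of Corollary~\ref{corollencl0} by incorporating the single enclave, then invoking Theorem~\ref{lthm} locally to the island. The key insight is that the enclave's quadruple $[{\cal K},{\cal T}_-,{\cal L},{\cal K}_\lambda]$ supplied by Theorem~\ref{aboutL} contributes a zero of ${\cal L}(s)$ on the critical line, exactly as the ordinary triple ${\cal K}_\lambda,{\cal T}_-,{\cal L}$ does in a non-inner-island interval; thus the enclave can serve as a separator between two inner islands without breaking the sandwich conditions $\mu_u^{(m)}<0<\mu_l^{(m)}$.

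First I would use Theorem~\ref{islzct} with $N_{\cal E}=1$ to inventory the island: it contains $N_Z({\cal U}_{\cal K})$ inner islands and one enclave. Enumerating the critical-line segment as an ordered sequence of tail, ``between'', enclave, and inner-island intervals, Theorem~\ref{aboutL} guarantees that every non-inner-island interval carries at least one critical-line point at which either ${\cal L}(s)=0$ or $S_0(s)=4\zeta(s)L_{-4}(s)=0$: tails and ordinary ``between'' intervals carry the triple ${\cal K}_\lambda,{\cal T}_-,{\cal L}$ (and possibly zeros of $\zeta$ or $L_{-4}$), while the enclave carries the quadruple $[{\cal K},{\cal T}_-,{\cal L},{\cal K}_\lambda]$. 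At each such critical-line zero, $\arg[-{\cal F}(s)]=0$.

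Next I would combine the sign conditions (\ref{srep15}) and (\ref{srep15en}) with Ki's monotonicity of $\arg {\cal U}$ to show that $\arg[-{\cal F}(1/2+it)]$ increases with $t$ throughout every non-inner-island interval of the island (tails, betweens, and the enclave alike) and decreases with $t$ inside each inner island. The nearest zero of $\arg[-{\cal F}]$ below the $m$th inner island therefore lies in an adjacent tail, between, or enclave interval, forcing $\mu_l^{(m)}>0$; symmetrically the nearest crossing above forces $\mu_u^{(m)}<0$. Theorem~\ref{lthm} then places the unique critical-line point of $\arg[-{\cal F}(s)]=0$ on the boundary $\Gamma_+^{(m)}$ strictly on $\sigma=1/2$, together with any attendant zero of $S_0(s)$. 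Remark~4 plus the statement immediately following (\ref{srep15en}) handles the remaining zeros of $S_0(s)$ in the outer-island and enclave regions.

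The main obstacle is the borderline configuration in which the enclave sits directly adjacent to an inner island, with no intervening ``between'' triple: one must confirm that the enclave's own zero of ${\cal L}$ still lies strictly between the $t_u$ of the inner island immediately below and the $t_l$ of the one immediately above, so that the monotonicity argument still supplies the sandwich. This reduces to a finite case analysis on the location of the single enclave among the $N_Z({\cal U}_{\cal K})+1$ slots available between and at the ends of the inner-island sequence, in the same spirit as the two schematic plots of Fig.~\ref{figencl0} for the no-enclave case but with the added option of replacing a ``between'' slot by the enclave quadruple. Schematic graphs of ${\cal V}_{\cal K}(1,1;s)/{\cal V}(s)$ against $t$ analogous to those of Fig.~\ref{figencl0} should dispose of each resulting sub-case.
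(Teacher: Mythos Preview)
Your route through Theorem~\ref{lthm}/Corollary~\ref{c1lthm} is genuinely different from the paper's, but the load-bearing step does not hold. You assert that Theorem~\ref{aboutL} guarantees every non--inner-island interval of the critical line carries either a triple containing a zero of ${\cal L}(s)$ or a critical-line zero of $S_0(s)$. Theorem~\ref{aboutL} says no such thing: it classifies the \emph{form} of the triples that do occur, but it does not promise one per interval. The paper's own worked example near $t=355$ already exhibits outer-island intervals such as $<{\cal T}_+,\zeta>$ and $<{\cal T}_+,L_{-4}>$ containing no zero of ${\cal L}(s)$; and you cannot appeal to the listed zero of $\zeta$ or $L_{-4}$ there without assuming it lies on the critical line, which is exactly what is to be proved. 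So the sandwich $\mu_u^{(m)}<0<\mu_l^{(m)}$ is not established by your monotonicity argument, and what you describe as a ``borderline'' configuration is in fact the generic one.

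The paper proceeds instead by a direct case analysis of the graph of ${\cal V}_{\cal K}(1,1;1/2+it)/{\cal V}(1/2+it)$, fixing the order of the zeros and poles contributed by ${\cal K}$, ${\cal K}_\lambda$, ${\cal T}_+$, ${\cal T}_-$ in the interval bounded by (or containing) the enclave, and then reading off the crossings with the level $-1$ via the Intermediate Value Theorem. Crucially, one of the sub-cases (the configuration ${\cal K}_\lambda,{\cal T}_-,{\cal L}:{\cal T}_+,{\cal K}:{\cal K},{\cal T}_-,{\cal L},{\cal K}_\lambda$ ending in the enclave) is \emph{not} settled by continuity alone: the schematic forces either two or zero crossings of $-1$ on the critical line, and the paper closes the case with a parity/counting argument (at most one off-axis zero is possible in the relevant inner-island portion, so ``zero or two'' on axis combined with ``zero or one'' off axis forces two on and none off). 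Your outline does not anticipate this; the phrase ``schematic graphs \ldots\ should dispose of each resulting sub-case'' hides precisely the step where the Intermediate Value Theorem is insufficient and an additional argument is required.
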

\begin{proof}
We  consider  the case of an enclave  with the structure ${\cal K},{\cal T}_-,{\cal L},{\cal K}_\lambda$ 
occurring first, followed by the two ${\cal T}_+,{\cal T}_-$ alternatives
and ${\cal K}_{\lambda},{\cal T}_-,{\cal L}$. Schematic graphs for the two alternatives are given in Fig. \ref{figencl1a}. The zero of
$S_0(s)$ occurs on the critical line just to the left of the singularity corresponding to ${\cal T}_+(s)=0$ in the first alternative, and just to its right in the second. In the first case, the zero of $S_0(s)$ is constrained in its location by being above a pole coming from negative infinity and below a zero. In the second case, the function first passes through a zero going negative and second has a pole at negative infinity, constraining it to pass through the value -1. 

We  next consider  the case of a structure ${\cal K},{\cal T}_-,{\cal L},{\cal K}_\lambda$ 
occurring first, followed by the two ${\cal K},{\cal T}_+$ alternatives
and ending with an enclave  ${\cal K},{\cal T}_-,{\cal L},{\cal K}_\lambda$ .
Schematic graphs for the two alternatives are given in Fig. \ref{figencl1b}. The case at left is clearcut: the two zeros of $S_0(s)$
occur each associated with a zero of ${\cal L}(s)$, and the existence of the former on the critical line is guaranteed by that of the latter. For the case at right, the two zeros of $S_0(s)$ lie between two zeros of ${\cal K}(0,0,s)$ and are not associated with zeros of ${\cal L}(s)$, so a different argument is needed from the case at left. In fact, the diagram at right requires either two occurrences of ${\cal V}_{\cal K}(1,1;1/2+i t)/{\cal V}(1/2+i t)=-1$ to occur on the critical line if the dashed line passes above the minimum, or no occurrences if it passes below. This means that either two or no zeros of $S_0(s)$ occur on the critical line in the interval in question.
However, we know that in the interval between the enclave and the preceding interval containing a zero of ${\cal L}(s)$ there must be
both an interval not an inner island, and an inner island (since each inner island can only accommodate one zero of $S_0(s)$). As well, the interval which is not an inner island can hold no zeros of $S_0 (s)$ off the critical line. Thus, this second argument shows that in fact there is only the possibility of one or no zeros of $S_0 (s)$ off the critical line in the region of interest. Combining these two facts, we see that in fact there can be no zeros off the critical line in the interval, and two zeros on it.
  \end{proof}

\begin{figure}[htb]
\includegraphics[width=2.5in]{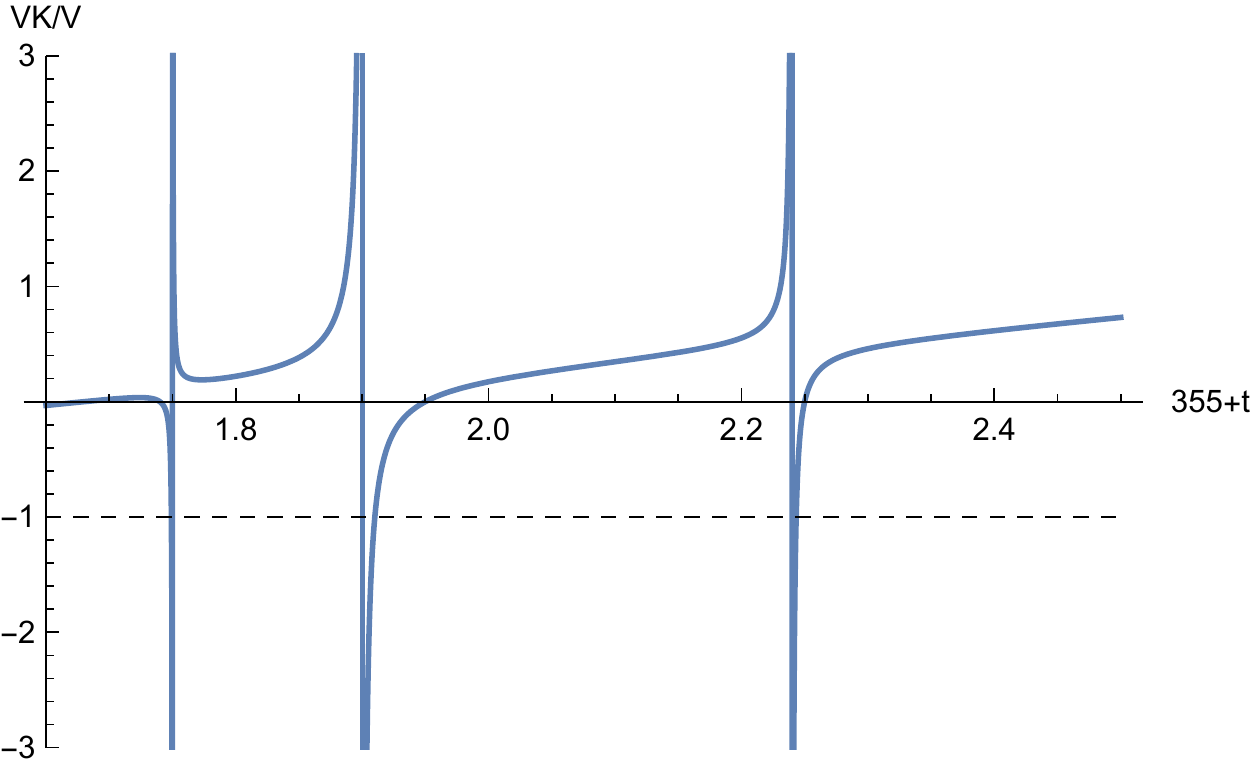}~~\includegraphics[width=2.5in]{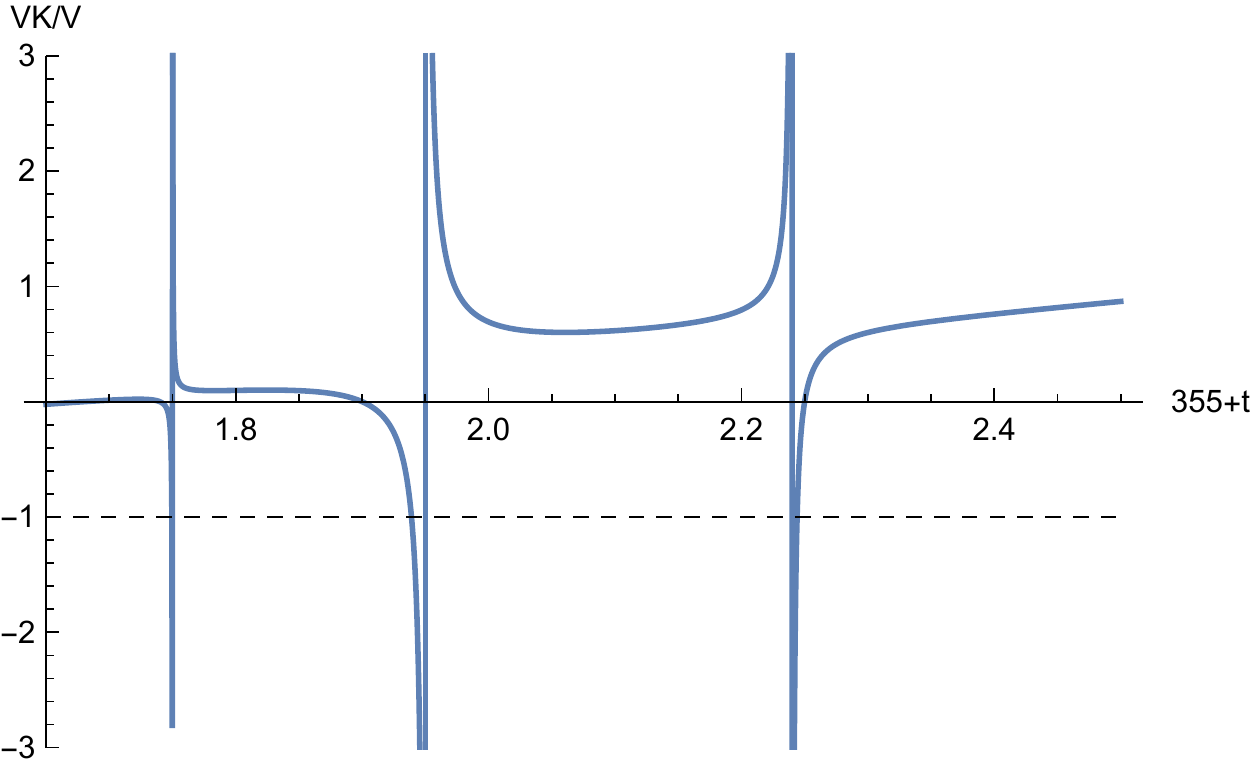}
\caption{Schematic plots of ${\cal V}_{\cal K}(1,1;1/2+i(355+t))/{\cal V}(1/2+i(355+t))$ versus $t$ , showing
the two cases beginning with an enclave: at left, ${\cal K},{\cal T}_-,{\cal L},{\cal K}_\lambda:{\cal T}_+,{\cal T}_-: {\cal K}_{\lambda},{\cal L}_,{\cal K}$; at right,  
${\cal K},{\cal T}_-,{\cal L},{\cal K}_\lambda:{\cal T}_-,{\cal T}_+: {\cal K}_{\lambda},{\cal L}_,{\cal K}$. The dashed line intersects the continuous curves at three points: the first and third denote zeros of ${\cal L}(s)$, and the second denotes a zero of $S_0(s)$.}
\label{figencl1a}
\end{figure}

\begin{figure}[htb]
\includegraphics[width=2.5in]{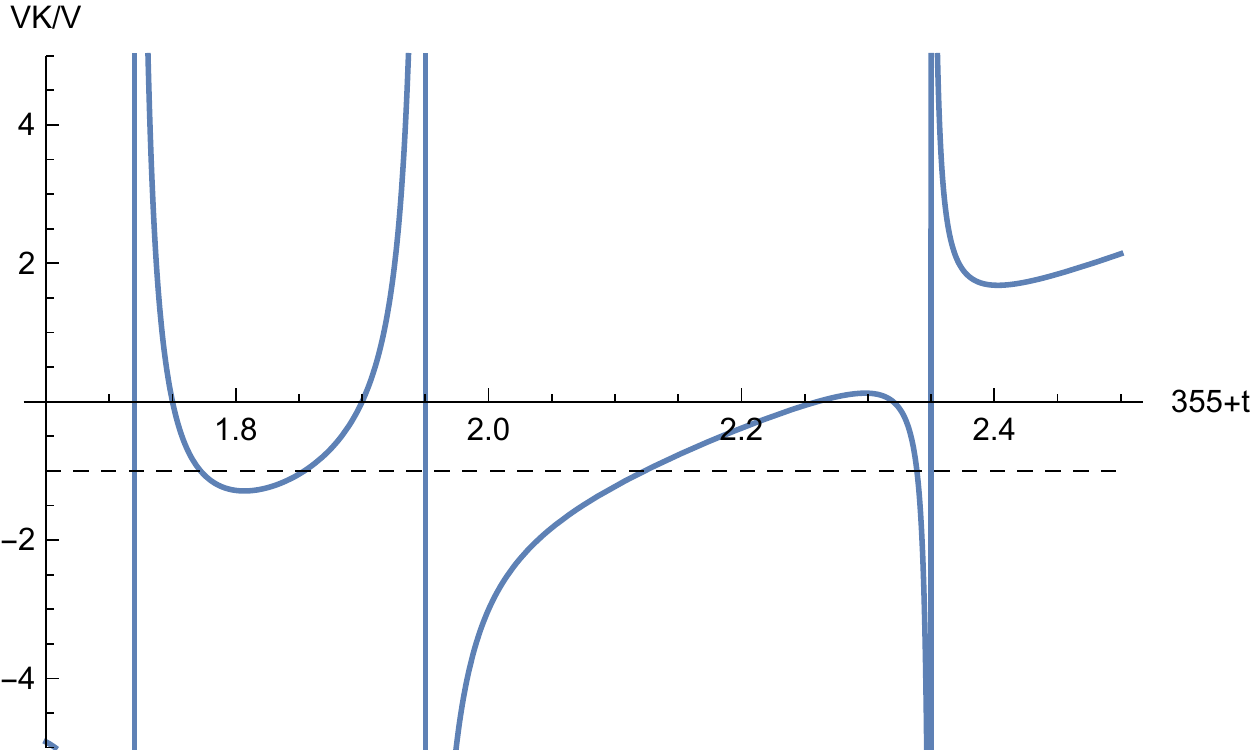}~~\includegraphics[width=2.5in]{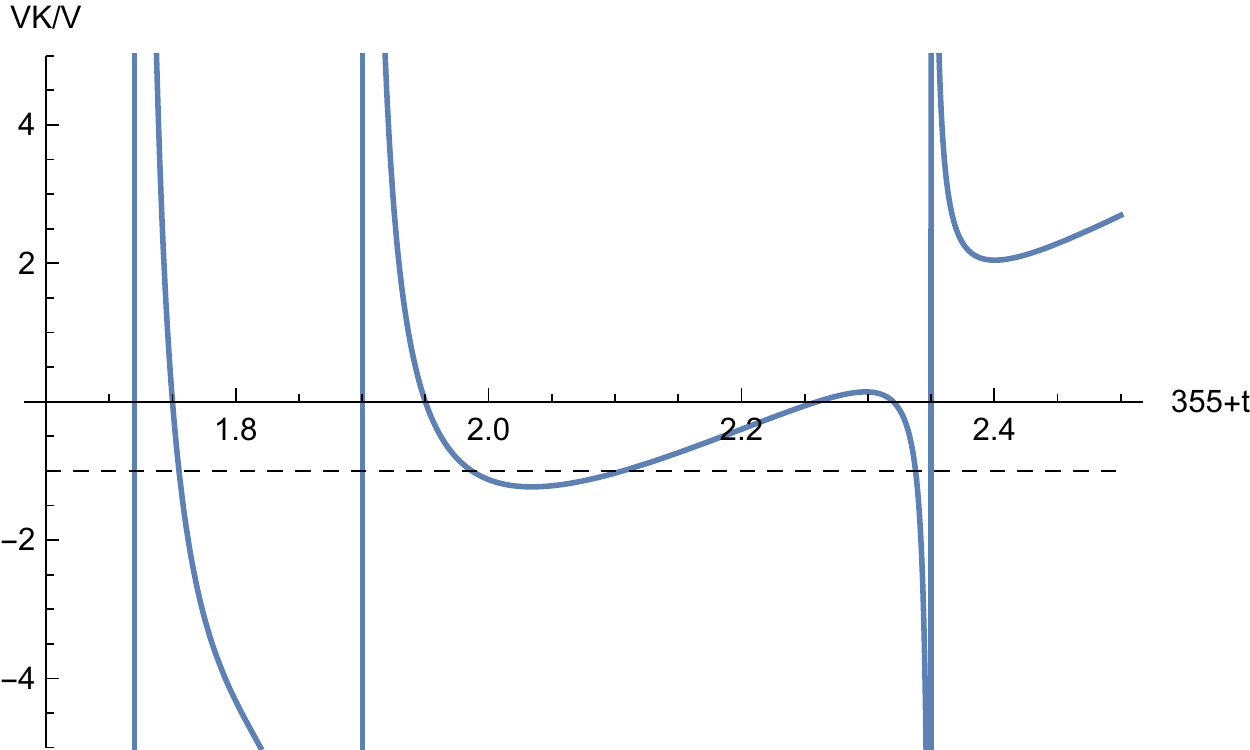}
\caption{Schematic plots of ${\cal V}_{\cal K}(1,1;1/2+i(355+t))/{\cal V}(1/2+i(355+t))$ versus $t$ , showing
the two cases ending with an enclave: at left, $ {\cal K}_{\lambda},{\cal T}_-,{\cal L} :{\cal K} ,{\cal T}_+ : {\cal K},{\cal T}_-,{\cal L},{\cal K}_\lambda$; at right,  $ {\cal K}_{\lambda},{\cal T}_-,{\cal L} :{\cal T}_+,{\cal K} : {\cal K},{\cal T}_-,{\cal L},{\cal K}_\lambda$
. The dashed line intersects the continuous curves at four points: the first and fourth denote zeros of ${\cal L}(s)$, and the second and third denote a zero of $S_0(s)$.}
\label{figencl1b}
\end{figure}

\begin{corollary}
Any island having two or more  enclaves within it has all its zeros of $S_0(s)$ on the critical line.
\label{corollencl2}
\end{corollary}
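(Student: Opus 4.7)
The plan is to bootstrap from Corollaries \ref{corollencl0} and \ref{corollencl1} by splitting the island at one of its enclaves and applying the previous analyses locally on each sub-segment of the critical line. Writing $N_{\cal E}$ for the number of enclaves in the island, I would proceed by induction on $N_{\cal E}\ge 2$, treating $N_{\cal E}=0$ and $N_{\cal E}=1$ as the base cases already established.

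First I would use Theorem \ref{aboutL} together with Theorem \ref{islzct} to enumerate the ordered list of zeros and poles of ${\cal V}_{\cal K}(1,1;1/2+it)/{\cal V}(1/2+it)$ along the critical-line segment of the island. Each enclave contributes the quadruple ${\cal K},{\cal T}_-,{\cal L},{\cal K}_\lambda$, each non-inner-island interval contributes a triple ${\cal K}_\lambda,{\cal T}_-,{\cal L}$, each inner island a triple ${\cal T}_-,{\cal K}_\lambda,{\cal L}$, and consecutive triples are separated by a pair ${\cal K},{\cal T}_+$ or ${\cal T}_+,{\cal K}$. These orderings, together with the signs implied by the factored form (\ref{nsec1}), rigidly determine the schematic shape of the graph of ${\cal V}_{\cal K}/{\cal V}$ on each sub-interval, exactly as in Figures \ref{figencl0}--\ref{figencl1b}.

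Next I would choose an enclave ${\cal E}_*$ lying at one end of the string of enclaves (for $N_{\cal E}\ge 2$ such a choice always exists), and split the critical-line segment of the island at a point inside ${\cal E}_*$ between its ${\cal T}_-,{\cal L}$ and ${\cal K}_\lambda$ entries. Each half then begins or ends with exactly the singular quadruple pattern analysed in Corollary \ref{corollencl1}, so the graph of ${\cal V}_{\cal K}/{\cal V}$ enters (or leaves) the half with the same $\pm\infty$ behaviour used in that proof. By the inductive hypothesis applied to each half separately — treating the half as if it were an island with strictly fewer enclaves — every crossing of the horizontal line $y=-1$ corresponding to a zero of $S_0(s)$ is forced onto the critical line by the Intermediate Value Theorem on the appropriate sub-interval. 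A final count using Theorem \ref{islzct} verifies that the number of such on-critical-line crossings exhausts the total number of zeros of $S_0(s)$ belonging to the island.

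The main obstacle will be the same degenerate possibility flagged in the right panel of Figure \ref{figencl1b}: between two enclaves or between an enclave and a surrounding triple, the line $y=-1$ might geometrically miss an inner-island interval (zero crossings) and compensate by cutting an adjacent interval twice. I would resolve this exactly as in Corollary \ref{corollencl1}: combine the local schematic count with the fact, recorded in Section 4, that any zero of $S_0(s)$ lying in an interval which is not an inner island is automatically on the critical line, together with the fact that each inner island can accommodate at most one zero of $S_0(s)$. These three ingredients together force the matching between $-1$-crossings and inner islands to be one-to-one across the entire island, closing the induction and yielding the corollary.
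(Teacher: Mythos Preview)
Your inductive scheme has a real gap. Corollaries~\ref{corollencl0} and~\ref{corollencl1} are stated and proved for \emph{islands}, whose critical-line segments begin and end at points where the $t$-derivative of $\arg {\cal U}_{\cal K}(1,1;1/2+it)$ changes sign. When you cut inside an enclave ${\cal E}_*$, neither resulting half is an island: one endpoint of each half sits inside an enclave, where the derivative has the wrong sign and no island-boundary condition holds. You therefore cannot invoke the earlier corollaries as an inductive hypothesis on the halves without first reformulating them as statements about segments with prescribed endpoint patterns, which you have not done. The phrase ``treating the half as if it were an island'' is exactly where the argument breaks. A secondary issue: Theorem~\ref{islzct} counts zeros of ${\cal K}$, ${\cal K}_\lambda$, ${\cal U}_{\cal K}$, ${\cal T}_\pm$ and ${\cal U}$, not of $S_0$, so your ``final count'' does not verify what you claim it does.

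The paper avoids induction entirely. It observes that the arguments in Corollaries~\ref{corollencl0} and~\ref{corollencl1} are really \emph{local}: they classify and settle every critical-line interval bounded on each side by one of the landmark structures (non-enclave triple) or (enclave quadruple), in all combinations except enclave--enclave. With $N_{\cal E}\ge 2$ the only genuinely new local configuration is an interval flanked by two enclaves. The paper simply analyses that one remaining case directly: two sub-cases according to the ordering of ${\cal K}$ and ${\cal T}_+$ in the middle, with the observation that graphical consistency forces the second enclave to carry its ${\cal K}$-zero at its far end. In both sub-cases the single $S_0$-zero is trapped on the critical line between two poles of ${\cal V}_{\cal K}/{\cal V}$ separated by a single zero. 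No splitting, no induction, no global count is needed.

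If you insist on an inductive framework, you must restate the hypothesis for segments whose endpoints carry one of a fixed finite list of boundary patterns and re-prove the base cases in that generality; but at that point you are reproducing the paper's local case analysis with extra bookkeeping and no gain.
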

\begin{proof}
Given Corollaries \ref{corollencl0} and \ref{corollencl1}, the only additional situation we need to consider is that of an interval
bounded by two enclaves. Schematic graphs for the two alternatives  for the interval between the enclaves are given in Fig. \ref{figencl2}. Note that the second enclave must have the zero of ${\cal K}(0,0;s)$ at its end rather than at its beginning, to ensure graphical consistency. In both cases, the single zero of $S_0(s)$ lies in the interval between two poles which includes a single zero of
${\cal V}_{\cal K}(1,1;s)/{\cal V}(s)$, and thus must be located on the critical line.
\end{proof}

\begin{figure}[htb]
\includegraphics[width=2.5in]{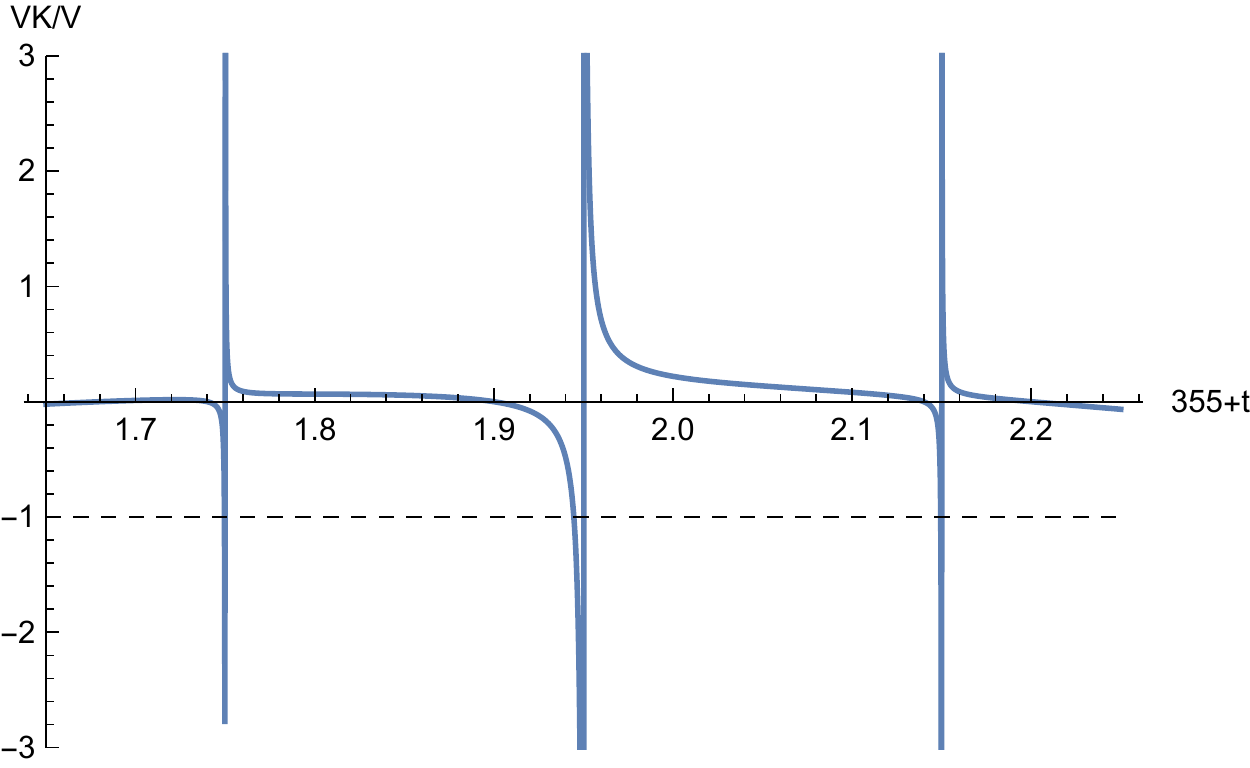}~~\includegraphics[width=2.5in]{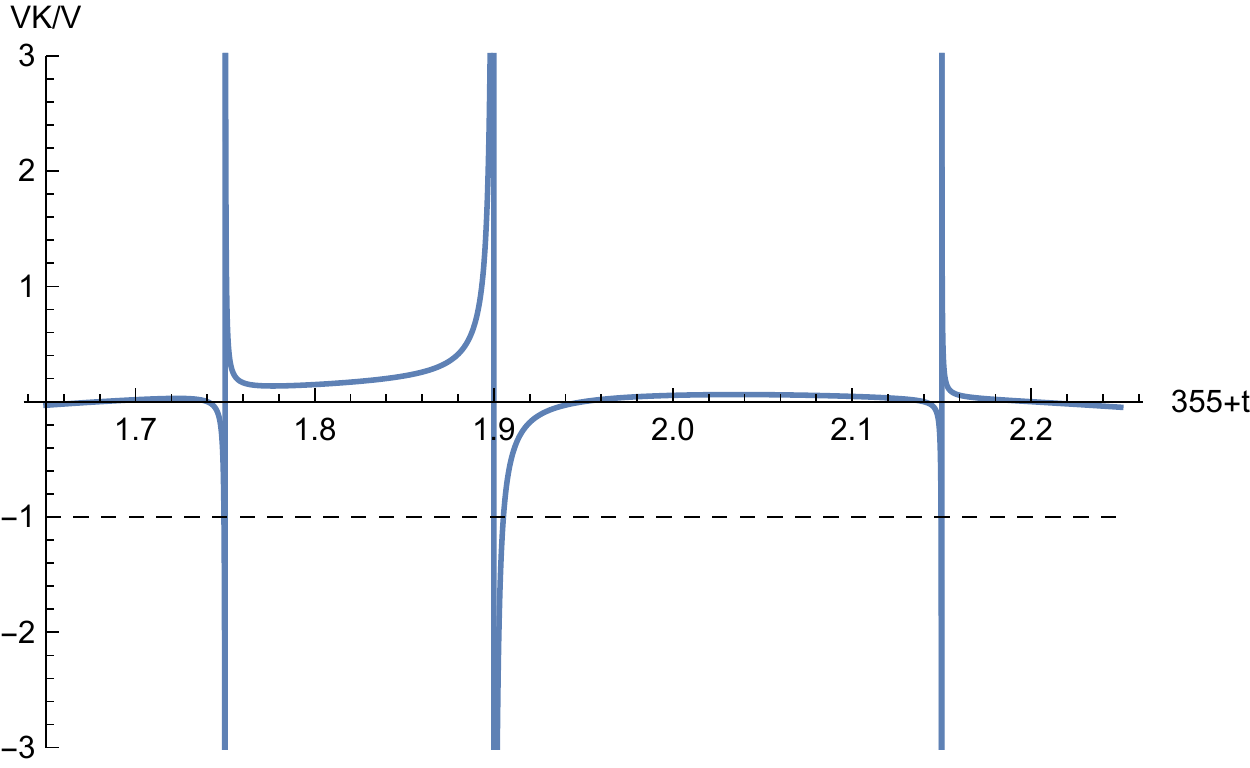}
\caption{Schematic plots of ${\cal V}_{\cal K}(1,1;1/2+i(355+t))/{\cal V}(1/2+i(355+t))$ versus $t$ , showing
the two cases beginning and ending with an enclave: at left, $ {\cal K},{\cal T}_-,{\cal L},{\cal K}_\lambda:{\cal K}, {\cal T}_+  : {\cal T}_-,{\cal L},{\cal K}_\lambda,{\cal K}$; at right,  $ {\cal K},{\cal T}_-,{\cal L},{\cal K}_\lambda :{\cal T}_+,{\cal K} : {\cal T}_-,{\cal L},{\cal K}_\lambda,{\cal K}$. The dashed line intersects the continuous curves at three points: the first and third denote zeros of ${\cal L}(s)$, and the second denotes a zero of $S_0(s)$.}
\label{figencl2}
\end{figure}

{\bf Remark:} We have thus shown for all  arrangements of zeros 
of  ${\cal K}(0,0;s)$,  ${\cal K}_\lambda (0,0;s)$, ${\cal T}_+(s)$ and ${\cal T}_-(s)$ which to our knowledge are possible  that  the zeros of $S_0(s)$  in an arbitrary island lie on the critical line.
\subsection{Comment on the Multiplicity of Zeros}
As we remarked above, it has been proved by Conrey, Iwaniec and Soundararajan\cite{conrey3}  that at least 56\% of the  non-trivial zeros in the family of all Dirichlet $L$ functions are simple and lie on the critical line. The results we have established here also enable us to comment on the question of the multiplicity of zeros of $S_0(s)$, which related not only to the multiplicity of the zeros
of $\zeta (s)$ and $L_{-4}(s)$, but also to the possibility of coincidence of zeros of these two functions.
\begin{theorem}
The only possible location for non-simple zeros of $S_0(s)$ is on the critical line, at its intersections with  boundaries of inner islands in $\sigma \ne 1/2$.
\label{thm-mult}
\end{theorem}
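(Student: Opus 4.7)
The plan is to express the order of a zero $s_0$ of $\tilde{S}_0(s)$ via equation (\ref{mthm4}) as the order of vanishing of ${\cal V}_{\cal K}(1,1;s)+{\cal V}(s)$ at $s_0$, then read this order off from an argument derivative on the critical line using (\ref{srep21}) and (\ref{srep24}), and finally invoke the geometric classification (\ref{srep26}) of extended, outer island and inner island regions to isolate the critical-line points at which non-simplicity is possible.

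In the generic situation $s_0\neq 1/2$, ${\cal T}_-(s_0)\neq 0$, ${\cal L}(s_0)\neq 0$, the denominator of (\ref{mthm4}) is non-zero, so the multiplicity of $\tilde{S}_0$ at $s_0$ equals that of ${\cal V}_{\cal K}(1,1;s)+{\cal V}(s)$. By (\ref{srep21}), combined with the observation that ${\cal U}(s_0)=-{\cal U}_{\cal K}(1,1;s_0)\neq 1$ (the contrary would force ${\cal T}_-(s_0)=0$), simplicity of the zero is equivalent to ${\cal U}'_{\cal K}(1,1;s_0)+{\cal U}'(s_0)\neq 0$. On the critical line, (\ref{srep24}) together with $|{\cal U}(s_0)|=1$ recasts this as
\[
\frac{d\arg{\cal F}(1/2+it_0)}{dt}\neq 0,\qquad {\cal F}(s)=\frac{{\cal U}_{\cal K}(1,1;s)}{{\cal U}(s)}.
\]
By (\ref{srep26}), this derivative is strictly positive throughout extended, enclave and outer island segments of the critical line and strictly negative interior to inner island segments, vanishing only at the transitions between the two regimes. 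At such a transition point $1/2+it_0$, the Cauchy--Riemann identity $\partial\log|{\cal F}|/\partial\sigma=\partial\arg{\cal F}/\partial t$ combined with $|{\cal F}|\equiv 1$ on $\sigma=1/2$ forces both partials of $\log|{\cal F}|$ to vanish, so the level set $|{\cal F}|=1$ has a singular point at which its off-axis branch, i.e.\ the inner island boundary, meets the critical line. A non-simple zero of $S_0$ on the critical line is therefore confined to such intersection points.

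The off-critical-line direction is handled by Corollaries \ref{corollencl0}--\ref{corollencl2}: for every island configuration identified in the present analysis all zeros of $S_0(s;1)$ already lie on the critical line, so there is no off-axis zero -- simple or multiple -- to discuss. The main obstacle is the careful bookkeeping of the exceptional cases excluded above: $s_0=1/2$ is ruled out by $S_0(1/2;1)=4\zeta(1/2)L_{-4}(1/2)\neq 0$; ${\cal L}(s_0)=0$ forces $\tilde{S}_0(s_0)\neq 0$ by the closing paragraph of the proof of Theorem~\ref{mthm}; and the case ${\cal T}_-(s_0)=0$ coinciding with $\tilde{S}_0(s_0)=0$ requires revisiting (\ref{srep6}) to see that both ${\cal V}(s)$ and ${\cal V}_{\cal K}(1,1;s)$ have poles at $s_0$, with the corresponding residues combining so that the simplicity characterisation still holds. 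Equally delicate is the Cauchy--Riemann step, which must be carried out with enough care to guarantee that a vanishing $d\arg{\cal F}/dt$ on the critical line signals a genuine touchpoint with an inner island boundary rather than a saddle of $|{\cal F}|$ unrelated to any such boundary component.
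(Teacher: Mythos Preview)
Your critical-line analysis is essentially the paper's own argument, carried out with a bit more care: you use (\ref{mthm4}) to identify the multiplicity of $\tilde S_0$ with that of ${\cal V}_{\cal K}+{\cal V}$, then (\ref{srep21}) and (\ref{srep24}) to reduce simplicity on $\sigma=1/2$ to $d\arg{\cal F}/dt\neq 0$, and finally (\ref{srep26}) to locate the vanishing points at the outer/inner island transitions. The Cauchy--Riemann step identifying these transitions with the touchpoints of the off-axis boundary $|{\cal F}|=1$ is also correct. The paper's proof does the same thing in less detail (its first paragraph is just Remark~3 plus Theorem~\ref{aboutL}, and it does not separately name the outer-island segments, though they are covered by the same derivative inequality).

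The genuine gap is your treatment of the off-axis case. You dispose of it by invoking Corollaries~\ref{corollencl0}--\ref{corollencl2}, but those results are case-by-case and cover only the island configurations that have been identified; the paper's own remark following them says explicitly that they handle ``all arrangements \ldots\ which to our knowledge are possible''. Relying on them here makes the theorem conditional on the completeness of that classification --- effectively on the GRH for $S_0$ itself --- whereas Theorem~\ref{thm-mult} is stated and proved unconditionally. The paper instead attacks a hypothetical off-axis zero directly: the logarithmic potential $\log{\cal F}$ has $\Re\log{\cal F}\equiv 0$ on the inner-island boundary, the inner island is symmetric under $s\mapsto 1-\bar s$ with a single zero of ${\cal F}$ in $\sigma>1/2$ and its mirror pole in $\sigma<1/2$, and the resulting single-source Dirichlet problem on each half is handled by potential theory (Kellogg, Chapter~XII, Theorem~VII) to conclude that $\nabla\log|{\cal F}|$ cannot vanish on the cavity boundary except at its two corner points on $\sigma=1/2$. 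Since a non-simple zero at $s_0$ forces ${\cal F}'(s_0)=0$ and hence $\nabla\log|{\cal F}|(s_0)=0$, this rules out off-axis non-simple zeros without any appeal to the corollaries. That potential-theory step is the ingredient your proposal is missing.
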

\begin{proof}
We know from Theorem \ref{mthm} that zeros of  $S_0(s)$ must lie on contours of unit modulus of ${\cal U}_{\cal K}(1,1;s)/{\cal U}(s)$ and ${\cal V}_{\cal K}(1,1;s)/{\cal V}(s)$. We have also stated in Remark 3 that zeros of $S_0(s)$ are simple in extended regions or enclaves, since $\arg {\cal U}_{\cal K}(1,1;s)$ and ${\cal U}(s)$ respectively increase/decrease as $t$ increases in extended regions or enclaves, with the latter decreasing monotonically. (Indeed, fromTheorem \ref{aboutL} we know further that  there are no zeros of $S_0(s)$ in enclaves.) 

We next consider the case of off-axis zeros located on the boundaries of inner islands $|{\cal U}_{\cal K}(1,1;s)/{\cal U}(s)|=1$ .
 The logarithmic potential associated with this  function  on the boundary of inner islands has a real part which is identically zero  (the Dirichlet condition) and its imaginary part in consequence obeys the condition that its normal derivative is zero (the Neumann condition). Its only singular points within the inner island are those where ${\cal U}_{\cal K}(1,1;s)/ {\cal U}(s)$ has its pole or zero. If $s_0$ denotes the position of the zero, the pole is at $1-\overline{s_0}$. A good discussion of such boundary value problems is contained in Chapter XII of the book of O.D. Kellogg \cite{kellogg}. The solution can be found if we prescribe the functional form of the cavity boundary and of the two source points. We note that the cavity is symmetrical under reflection in the critical line, and therefore the potential theory problem can be broken up into two single-source parts: that in $\sigma\ge 1/2$ and $\sigma\le 1/2$. These single-source problems then correspond to the discussion of Theorem VII \cite{kellogg}, and thus there are no points at which
 the potential gradient of the analytic potential vanishes on the cavity boundary, except the two points where it has a corner, i.e. the points where the inner island boundary in $\sigma\ne 1/2$ cuts the critical line. 
\end{proof}

{\bf Acknowledgement:} 

This paper is dedicated to the memory of the late Professor J.M. Borwein, a distinguished colleague and friend.

\clearpage
\newpage

\end{document}